\newcommand{\cross}{\mathop{\backslash\mathllap{/}}}
\newcommand{\dbars}{\mathop{||}}
\let\oldsqrt\sqrt
\def\sqrt{\mathpalette\DHLhksqrt}
\def\DHLhksqrt#1#2{%
\setbox0=\hbox{$#1\oldsqrt{#2\,}$}\dimen0=\ht0
\advance\dimen0-0.2\ht0
\setbox2=\hbox{\vrule height\ht0 depth -\dimen0}%
{\box0\lower0.4pt\box2}}
\newcommand{\Ecal}   {{\mathcal E }} 
\newcommand{\Vcal}   {{\mathcal V }} 
 \newcommand{\com}{\textnormal{c}} 
\newcommand{\Tfin}{{\mathop{T_{\operatorname{fin}}}}}
\newcommand{\Gfin}{{\mathop{G_{\operatorname{fin}}}}}
\newcommand{\PGW}{\mathbb{P}_{\operatorname{GW}}}
\newcommand{\EGW}{\mathbb{E}_{\operatorname{GW}}}
\newcommand{\Pqu}{\mathbf{P}^{\theta,X}_{{n}}}
\newcommand{\GW}{Galton-Watson }
\numberwithin{equation}{section}
\newtheorem{thrm}{Theorem}
\newtheorem{ex}[thrm]{Example}
\newtheorem{lem}[thrm]{Lemma}
\newtheorem{cor}[thrm]{Corollary}
\newtheorem{prop}[thrm]{Proposition}
\begin{document}

\makeatletter
\renewcommand\section{\@startsection {section}{1}{\z@}%
	{-0.5ex \@plus -2ex \@minus -.2ex}%
	{0.3ex \@plus.2ex}%
	{\normalfont\Large\bfseries}}
\renewcommand\subsection{\@startsection {subsection}{1}{\z@}%
	{-0.5ex \@plus -2ex \@minus -.2ex}%
	{0.3ex \@plus.2ex}%
	{\normalfont\large\bfseries}}
\makeatother

\title{Phase transition for loop representations of Quantum spin systems on trees}
\author{Volker Betz\footnote{betz@mathematik.tu-darmstadt.de}, Johannes Ehlert\footnote{ehlert@mathematik.tu-darmstadt.de}, Benjamin Lees\footnote{benjaminlees90@gmail.com}}
\date{Technische Universit\"at Darmstadt, Germany}
\maketitle

\begin{abstract}
\noindent
We consider a model of random loops 
on \GW trees with an offspring distribution with high expectation. 
We give the configurations a weighting of $\theta^{\#\text{loops}}$.  
For many $\theta>1$ these models are equivalent to certain quantum spin 
systems for various choices of the system parameters. We find conditions on the offspring distribution that guarantee 
the occurrence of a phase transition from finite to infinite loops for the \GW tree. 
\end{abstract}
\section{Introduction}

Loop models are percolation type probabilistic models with intimate 
connections to the correlation functions of certain quantum spin systems. 
To describe them, let $G = (\Vcal,\Ecal)$ be a graph, and for each edge $e \in \Ecal$, 
let $X_e$ be a random variable that takes values in the set of 
finite collections of points (called `links') 
inside an interval $[0,\beta]$. 
The points may be marked, the most important case being that 
there are two different types of points, called crosses and bars. 
Given a configuration $(X_e)_{e \in \Ecal}$, 
a loop configuration is constructed in the following way: each vertex $v$
is assigned a copy of the interval $[0,\beta)_{per}$ (with end points identified), and is then wired to other 
vertices by laying wires' that cross to a neighbouring vertex at 
those places where $X_e$, for an edge $e$ that is incident to 
$v$, has a point. If that point is a bar, the wire is in addition continued 
in the opposite direction on the new edge, otherwise in the 
same direction. Figure \ref{fig:loopexample} gives an illustration of such 
a loop configuration. It is easy to see that this prescription indeed 
results in a configuration of disjoint loops, but that on the other hand 
a vertex can be contained in more than one loop, and that a 
single loop may visit a vertex several times. 
A vertex $v$ is said to be connected to a vertex $v'$ if they share a loop. By thinking of the wiring one may also interpret this as the wiring conducting electricity from $v$ to $v'$. The basic question is about the existence of 
percolation in this sense, i.e. the probability of transferring 
electricity to infinity. 

In the simplest models, the joint probability law of the $(X_e)_{e \in \Ecal}$ 
is a product of independent laws for each $e \in \Ecal$. 
A relevant choice is a pair of independent 
Poisson point processes for crosses and bars, respectively. 
While it is clear that a necessary condition for 
loop percolation is the existence of an infinite cluster of 
edges carrying at least one point, 
the fact that disjoint loops can share an edge makes this 
condition far from sufficient. Except in cases where reflection positivity 
can be applied (see below), very little is known about the existence of infinite 
loops. In the case of random interchange it was shown by Schramm \cite{S} that infinite loops occur on the complete graph. The reader is encouraged to consult the recent review of Ueltschi \cite{U2} and references therein for a more complete overview of current results in this direction.

Loop models that correspond to quantum systems are more complicated: 
they use the independent distribution of the $(X_e)$ as a 
reference measure, but change it with an energy which (in finite volume) 
is proportional to the total number of {\em loops} in the configuration.
We write the corresponding Boltzmann factor in the form
$\theta^{\#\text{loops}}$, $\theta\geq 1$.
Then an infinite volume limit has to be taken. 
Relevant quantum systems include 
the spin-$\tfrac{1}{2}$ Heisenberg ferromagnet, the quantum XY model and a 
spin-1 nematic model. The Heisenberg ferromagnet has been 
investigated using a loop model that is random interchange with 
$\theta = 2$, see \cite{T}. This 
representation was extended by Ueltschi \cite{U} to a family of models 
that, in spin-$\tfrac{1}{2}$, interpolate between the Heisenberg ferro- 
and antiferromagnet and the quantum XY model. The case of the 
antiferromagnet had previously been investigated by Aizenman and 
Nachtergaele \cite{A-N} and the two models agree in this case. 
In all of these examples, one expects that, in the case where the 
reference measure of the $X_e$ is a standard Poisson point process, 
a phase transition occurs: below a critical interval length 
(`temperature') $\beta_c$, loops are finite with probability one, but 
above $\beta_c$ infinite loops appear. This result has
been proved in the case of a cubic lattice of sufficiently high dimensions 
for $\theta$ an integer \cite{U}. The proof relies on the method of
reflection positivity and infra-red bounds which are actually properties 
of the related quantum spin system, and are only available in highly 
symmetric cases such as $\mathbb Z^d$. For the loop models corresponding 
to the quantum XY model and the Heisenberg antiferromagnet (as well as 
interpolations between the two models) this result corresponds to results 
in the papers of Dyson, Lieb and Simon \cite{D-L-S} and Kennedy, Lieb and 
Shastry \cite{K-L-S} for the spin-systems.

In \cite{B-U}, Bj\"ornberg and Ueltschi investigate the loop model 
for $\theta = 1$ (independent $X_e$) in the case where $G$ is a 
$d$-regular tree, and where $d$ is large. The advantage of this setting is 
that $G$ itself has no loops, so the only complications stem from 
doubly occupied edges. By making $d$ large and $\beta = O(1/d)$, 
they can show that in this case the loop percolation threshold 
corresponds, to first order
in $1/d$, to the percolation threshold for occupied edges. What is more, 
they are able to analyse situations where some doubly occupied 
edges are present, and thus get upper and lower bounds for the loop 
percolation threshold that differ from those of edge percolation by a 
term of order $1/d$ and are sharp up to terms of order $1/d^2$. In addition,
these bounds depend on the intensity of crosses and bars, respectively. The
case of $\theta=1$ with only crosses corresponds to random interchange and 
has been previously studied by Angel \cite{A} and Hammond \cite{H1,H2}.
In a very recent preprint \cite{B-U3}, they extend these results to the case 
where $\theta \neq 1$. The justification for studying a tree is that for very
high space dimensions, the difference between a $d$-regular tree and 
$\mathbb Z^d$ in terms of percolation questions should be small. 

In the present paper, we consider the loop model on random trees, more 
precisely on Galton-Watson trees which are strongly supercritical. The idea 
is that if we are in $\mathbb Z^D$ for very high space dimension $D$, 
we can first delete a number of edges that will be unoccupied anyway, 
and have a graph that is approximately a tree, and which on average has 
$1 \ll d \ll D$ children per vertex. We then assume that it 
actually is a tree, but it is still random and certainly not $d$-regular, 
and for mathematical convenience we take it to be a \GW tree. While none of 
these assumptions are strictly true, they are a slightly better 
approximation of the truth than the regular tree assumed in the works of 
Bj\"ornberg and Ueltschi. We investigate this model to first oder in $1/d$, 
and find that up to this order, the loop percolation threshold is again 
equal to the occupied edge percolation threshold on the tree. While this 
result may seem obvious, the presence of the factor 
$\theta^{\#\text{loops}}$ and the fact that the tree may have vertices with 
degree significantly larger than $d+1$ makes the proof non-trivial. 
It is based on ideas from \cite{B-U} and estimates on the effects of links on $\theta^{\#\text{loops}}$. 

\section{Definition and main result}

For a graph $G=(\mathcal V(G),\mathcal E(G))$ fix a parameter $u \in [0,1]$.
We construct, independently on each edge $e \in \mathcal E(G)$, two Poisson processes $N^{e,\cross}$ and $N^{e,\dbars}$ over the time interval $[0,\beta)$ with intensity $ u$ and $(1-u)$, respectively. 
Furthermore, for any finite subgraph, $\Gfin$, of $G$, we denote the joint distribution of $N^{e,*}$ with $e \in \mathcal E(\Gfin)$ and $* \in \{\cross,\dbars\}$ by $\rho_{\Gfin}$. 
For simplicity we write $N^e:= N^{e,\cross}+N^{e,\dbars}$ and say that there is a \textit{link} on an edge $e \in \mathcal E$ at time $t\in[0,\beta)$ iff $N^e$ has a jump at time $t$. Finally, the expectation with respect to $\rho_{\Gfin}$ will be denoted by $\mathbb E_{\Gfin}$.
\begin{figure}[t]
	\includegraphics[width=10cm,height=5cm]{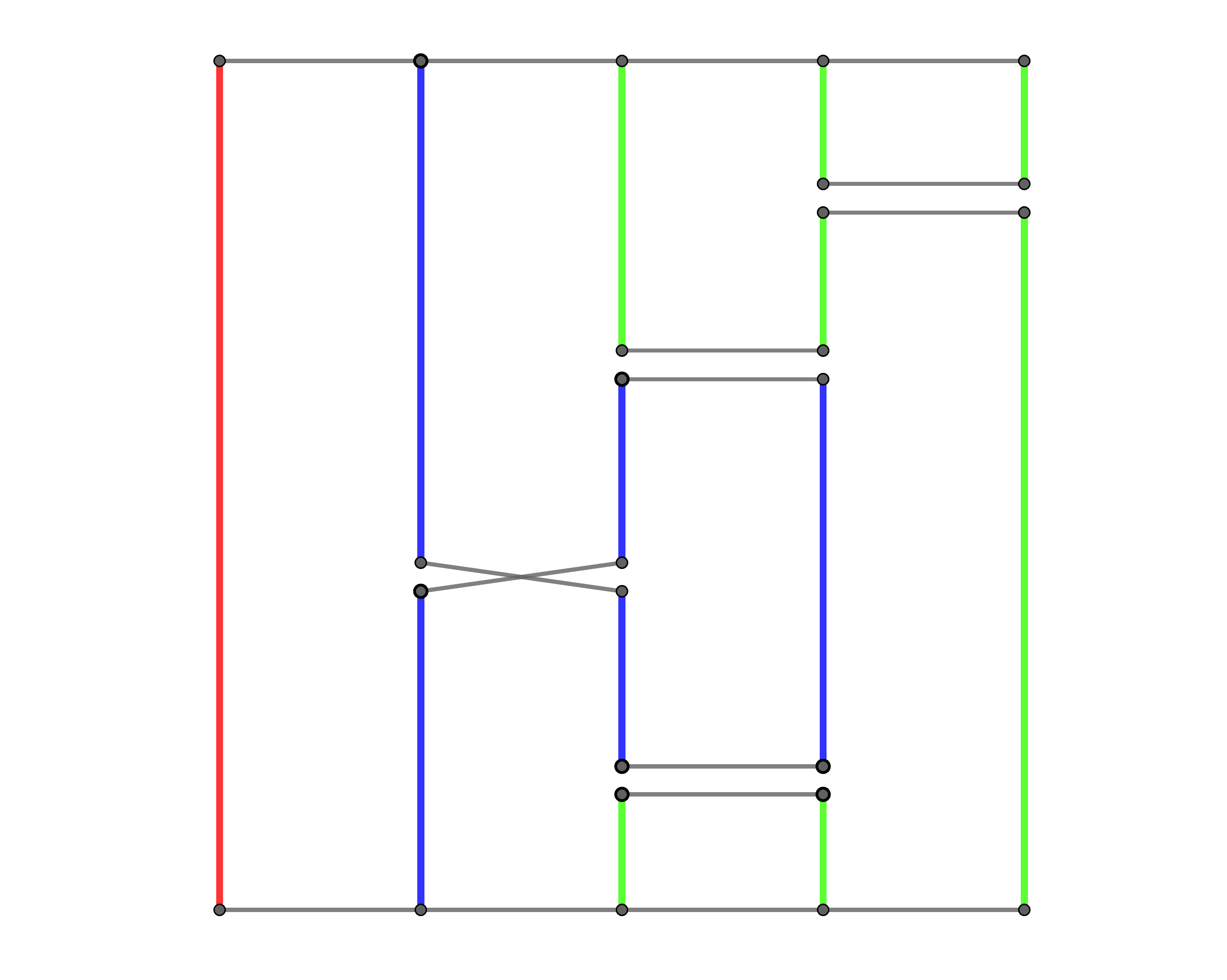}
	\centering
	\caption{A simple example of a realisation, $\omega$, with three loops coloured.}
	\label{fig:loopexample}
\end{figure}

For a realisation, $\omega$, of $(N^{e,*})_{e \in \mathcal E(G),* \in \{\cross,\dbars\}}$ and a finite subgraph, $\Gfin$, of $G$ we construct loops in the usual way (see e.g. \cite{U}). In fact, the construction only depends on the events of $\omega$ belonging to $e \in \mathcal E(\Gfin)$. 
More precisely, a loop is the support of a parametrisation $[0,1] \to \mathcal V(\Gfin)\times[0,\beta)_\textnormal{per}$ that respects the links $\omega_i=(e_i,t_i,*_i)$ of $\omega=(\omega_i)_i$ with $e_i \in \mathcal E(\Gfin)$. This means that we start at some $(x,t_0)$ and move along $x$ in some time direction until we encounter a link on an edge $e_i=\{x,y\}\in \mathcal E(\Gfin)$ at time $t_i$. Then we jump to $y$ and continue in the same time direction, if $*_i=\cross$, or in the opposite time direction, if $*_i=\dbars$, as before. If we reach $(x,0)$ or $(x,\beta)$ we use periodicity to move to $(x,\beta)$ or $(x,0)$, respectively, and continue in the same direction. This procedure is best explained by a picture (see figure \ref{fig:loopexample}). From this procedure we obtain a partition, $\mathcal L_{\Gfin}(\omega)$, of $\mathcal V(\Gfin)\times[0,\beta)_\textnormal{per}$ into loops.
By $L_{\Gfin}(\omega):=|\mathcal L_{\Gfin}(\omega)|$ we denote the number of loops within $\Gfin$ for the realisation $\omega$. Finally, for $\theta \geq 1$, the probability measure of interest is given by
\begin{equation}
\mathbb P_{\Gfin}^\theta(B) := \frac{\mathbb E_{\Gfin}\left[\mathbf{1}_{B} \theta^{L_{\Gfin}} \right] }{\mathbb E_{\Gfin}\left[ \theta^{L_{\Gfin}} \right]}.
\label{eq:P-theta-definition}
\end{equation}

Our first theorem concerns the $d$-regular tree. We denote by $T_x^{n}$ the $d$-regular tree rooted at $x$ and containing ${n}$ generations. 
For a given finite tree $\Tfin$ and $x \in \mathcal V(\Tfin)$ we consider the event $E^{x\to m}_{\Tfin}$ that there is a loop within $\mathcal L_{\Tfin}(\omega)$ containing $x$ (at some time) that reaches the $m^\text{th}$ generation of the tree.

\begin{thrm}\label{thrm:ex-infinite-loops}
	Let $b >a>\theta>q$ be arbitrary but fixed. Then:
	\begin{enumerate}
		\item There is a $d_0 \in \mathbb N$, depending on $a$ and $b$, such that for all $d\geq d_0$ and all $\beta=\beta(d)\in \left[ \frac{a}{d},\frac{b}{d}\right]$ we have
	\begin{align*}
	\liminf_{m \to \infty} \inf_{n \geq m} \mathbb P_{T_{r}^{n}}^\theta [E^{r\to m}_{T_r^n}]>0   .
	\end{align*}
	\item There is a $d_0 \in \mathbb N$, depending on $q$, such that for all $d\geq d_0$ and all
	$\beta=\beta(d)\leq \frac{q}{d}$ we have
	\begin{align*}
	\limsup_{m \to \infty} \sup_{n \geq m} \mathbb P_{T_r^{n}}^\theta [E^{r\to m}_{T_r^n}]=0  . 
	\end{align*}
	\end{enumerate}
\end{thrm}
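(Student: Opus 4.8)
The plan is to exploit that the underlying graph $T_r^n$ is a tree, so that the reference measure $\rho_{\Tfin}$ factorises over edges and the only coupling between different edges comes through the global factor $\theta^{L_{\Tfin}}$. I would first reformulate the event $E^{r\to m}_{T_r^n}$ as the existence of a single loop that threads from the root down to generation $m$: on a tree the root-to-target path is unique, so such a loop must, at the first edge it uses, cross into some child subtree through a link, reach depth $m-1$ inside that subtree, and eventually return. This suggests comparing the growth of the root loop with a branching process whose particles are ``pieces of loop entering a subtree'', and then identifying the effective mean number of children reached per generation.

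The decisive preliminary step is to quantify the effect of a single link on $\theta^{L_{\Tfin}}$. On a tree, inserting one link on an edge $e=\{x,y\}$ either merges the loop through $x$ with the loop through $y$ or splits one loop into two, so $L_{\Tfin}$ changes by exactly $\pm 1$; combined with the factorisation of $\rho_{\Tfin}$ over the two components separated by $e$, this yields sharp control of the ratio of partition functions $\mathbb E_{\Tfin}[\theta^{L_{\Tfin}}]$ with and without the link, \emph{uniformly} in the sizes of the two subtrees. The upshot I expect is that, after renormalising by $\theta^{L_{\Tfin}}$, the probability that the root loop actually crosses a given child edge is $\beta/\theta + O(\beta^2)$ rather than the bare $1-e^{-\beta}\approx\beta$: the weighting penalises by $\theta^{-1}$ the loop-merging that a crossing entails. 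Since the root has $d$ children and $\beta\in[a/d,b/d]$ (respectively $\beta\le q/d$), the effective mean number of children reached is $d\beta/\theta + o(1)$, which exceeds $1$ in case (1) because $a>\theta$, and stays bounded below $1$ in case (2) because $q<\theta$.

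For part (2) I would run a first-moment argument. Writing $Z_m$ for the number of generation-$m$ vertices lying on a common loop with the root, one has $\mathbb P^\theta_{T_r^n}[E^{r\to m}_{T_r^n}]\le \mathbb E^\theta_{T_r^n}[Z_m]$, where $\mathbb E^\theta_{T_r^n}$ denotes expectation under $\mathbb P^\theta_{T_r^n}$. Applying the single-link estimate at each edge of the unique root-to-vertex path and telescoping down the tree bounds $\mathbb E^\theta_{T_r^n}[Z_m]$ by $(d\beta/\theta+o(1))^m\le (q/\theta+o(1))^m$. For $d$ large the base is $<1$, so this tends to $0$ as $m\to\infty$, uniformly in $n\ge m$, which is the assertion. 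The point requiring care is that the per-edge partition-function ratios must be controlled uniformly in the depth $n$, so that the recursion accumulates no error; this is exactly what the uniform single-link estimate provides.

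For part (1) the natural route is a second-moment (equivalently, a Galton--Watson survival) argument: establish $\mathbb E^\theta_{T_r^n}[Z_m]\ge c\,\lambda^m$ with $\lambda=d\beta/\theta+o(1)>1$, together with a matching upper bound $\mathbb E^\theta_{T_r^n}[Z_m^2]\le C\,\mathbb E^\theta_{T_r^n}[Z_m]^2$, whence the Paley--Zygmund inequality gives $\mathbb P^\theta_{T_r^n}[E^{r\to m}_{T_r^n}]\ge \mathbb E^\theta_{T_r^n}[Z_m]^2/\mathbb E^\theta_{T_r^n}[Z_m^2]\ge c'>0$ uniformly in $m$ and $n\ge m$. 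The main obstacle is precisely the correlation introduced by $\theta^{L_{\Tfin}}$: the offspring contributions of distinct children are not independent, since whether or not a loop merges in one subtree alters the total loop count seen by the others, so a clean branching-process coupling is unavailable. To handle this I would sandwich the conditional one-step offspring law, from above and below, between the offspring laws of honest multitype branching processes, using the uniform single-link bounds to show that the $\theta^{L_{\Tfin}}$-induced dependence is of lower order in $1/d$; the tree structure then confines all second-moment correlations to the branch point of two generation-$m$ vertices, reducing $\mathbb E^\theta_{T_r^n}[Z_m^2]$ to a geometric sum governed by $\lambda$. Showing that this induced dependence is genuinely $o(1)$ relative to the branching, uniformly in the tree depth, is where I expect the real difficulty to lie.
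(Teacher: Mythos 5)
Your treatment of part (2) is essentially the paper's argument: the paper (Lemma \ref{lem:GWexpdecay}, specialised to deterministic offspring via Corollary \ref{cor:GWloops-large-expectation}) also runs a first-moment recursion in which each linked root edge contributes an effective factor $(\mathrm{e}^{\beta\theta}-1)/\theta^2=\beta/\theta+O(\beta^2)$ after dividing by the partition-function lower bound $\theta\mathrm{e}^{-\beta d+\beta d/\theta}\prod_j\mathbb E_{T_j}[\theta^{L_{T_j}}]$, giving $\sigma^m_n\le\tilde q^{\,m-1}$ with $\tilde q\approx d\beta/\theta<1$. Your sketch glosses over multiply-occupied edges (your claim that one link changes $L$ by exactly $\pm1$ is only the generic case; the paper's Proposition \ref{prop:loopnumbers-subgraphs} handles the general situation with inequalities), but the route is the same and the gaps there are routine.

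Part (1) is where there is a genuine gap. You propose a second-moment/Paley--Zygmund argument and correctly identify its obstacle --- the correlations between subtrees induced by $\theta^{L}$ --- but you do not resolve it: the bound $\mathbb E^\theta[Z_m^2]\le C\,\mathbb E^\theta[Z_m]^2$, uniformly in $m$ and $n$, is exactly the hard step, and ``sandwiching the conditional offspring law between multitype branching processes'' is a declaration of intent, not an argument. As it stands, nothing in your proposal shows that the $\theta^{L}$-induced dependence is $o(1)$ uniformly in depth, so the conclusion does not follow. The paper avoids second moments entirely by a device you do not use: restrict to the event $A$ that every root edge carries \emph{at most one} link. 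On $A$, Proposition \ref{prop:loopnumbers-subgraphs} is an equality, $L_{T_0}=\sum_j L_{T_j}+1-\sum_j N_\beta^{\{r,x_j\}}$, so the weight $\theta^{L_{T_0}}$ factorises exactly over the subtrees and (crucially) $\{r\}\times[0,\beta)$ lies in a single loop, so the root loop reaches generation $m$ iff some linked child's loop reaches generation $m-1$. This yields a closed recursion (Lemma \ref{lem:P-of-A-cap-B-recursion}, Lemma \ref{lem:GWzeta-recursion}) for the non-reaching probability,
\begin{align*}
1-\zeta^m_n\;\geq\;\left(\frac{\theta^2+\beta\theta}{\theta^2+\mathrm{e}^{\beta\theta}-1}\right)^{d}-\mathrm{e}^{-\beta d/\theta}\left(1+\tfrac{\beta}{\theta}\,\zeta^{m-1}_{n-1}\right)^{d},
\end{align*}
and an induction on $m$ showing $1-\zeta^m_n\ge\varepsilon$ uniformly in $m$ and $n\ge m$; the cost of discarding $A^{\mathrm c}$ is absorbed because $\mathbb P^\theta_{T_0}(A)$ is close to $1$ when $\beta=O(1/d)$. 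If you want to salvage your second-moment plan, the same observation (exact factorisation of the weighted measure on single-link configurations) is what you would need to control the joint law of two generation-$m$ vertices; but that is substantial work the paper's recursion makes unnecessary.
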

Hence, under the conditions of part 1, with positive probability there is an infinite loop from the root in the limit $n\to\infty$.
Note that the existence of the limit in the theorem is not proved in general (although they certainly exist) and only limited results exist, such as for the case $u=1/2$ and $\theta=2$ \cite{B-L-U}. For this reason we take the $\liminf$ and $\limsup$, respectively.

Our next result concerns the \GW tree. Let $X$ be a random variable with values in $\mathbb{N}_0$. Denote by $\PGW$ and $\EGW$ the probability and expectation with respect to the \GW tree with offspring distribution $X$, respectively. We denote by $T^{X}_{r,n}$ a realisation of the \GW tree with root $r$ cut at level $n$ (i.e., the tree has $n$ generations). We consider the \emph{quenched} measure
\begin{equation}
\Pqu:=\EGW\left[\mathbb P_{T^{X}_{r,n}}^\theta(\cdot)\right].
\end{equation}
We have the following theorem:
\begin{thrm} $~$\label{thrm:GWloops}
\vspace{-6pt}
\begin{enumerate}
		\item If there is an $\varepsilon>0$ and a $\beta>0$, both depending on the distribution of $X$, such that
\begin{align*}
\EGW &\left[ \mathrm{e}^{-\frac{\beta}{\theta}X}\right]\leq 1-\varepsilon
\intertext{and}
\EGW &\left[ \left(\frac{\theta^2+\beta \theta}{\theta^2+\mathrm{e}^{\beta\theta}-1}\right)^{X}-\left(\mathrm{e}^{-\frac{\beta}{\theta}}\left(1+\frac{\beta}{\theta}(1-\varepsilon)\right)\right)^{X}\right]\geq \varepsilon.
\end{align*}
Then
\begin{align*}
\liminf_{m \to \infty} \inf_{n \geq m} \Pqu [E^{r\to m}_{T^X_{r,n}}]>0.
\end{align*}
\item If there is a $\beta>0$, depending on the distribution of X, such that
\begin{align*}
\EGW\left[X \mathrm{e}^{-\frac{\beta}{\theta} X} \left(1+\frac{\mathrm{e}^{\beta \theta}-1}{\theta^2}\right)^{X-1}\right] \frac{\mathrm{e}^{\beta \theta}-1}{\theta^2}< 1
\end{align*}
holds, then
\begin{align*}
\limsup_{m \to \infty} \sup_{n \geq m} \Pqu [E^{r\to m}_{T^X_{r,n}}]=0.
\end{align*}
\end{enumerate}
\end{thrm}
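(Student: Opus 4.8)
The plan is to reduce both parts to the survival or extinction of an effective branching process that describes how a single loop through the root spreads from one generation of the tree to the next, and then to obtain the stated criteria by averaging this process over the Galton--Watson tree generation by generation. The one input that is not purely combinatorial is a quantitative estimate for the effect of the links on a single edge on the global weight $\theta^{\#\text{loops}}$: integrating out a child (and the subtree below it) across one edge, I expect the reweighted expected number of times the loop is carried from parent to child to be controlled by $\frac{\mathrm e^{\beta\theta}-1}{\theta^{2}}$, while integrating out the remaining $X-1$ sibling subtrees at a branch vertex and the local normalisation of $\mathbb P^{\theta}$ produce the factors $\bigl(1+\frac{\mathrm e^{\beta\theta}-1}{\theta^{2}}\bigr)^{X-1}$ and $\mathrm e^{-\beta X/\theta}$. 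These are exactly the building blocks appearing in the theorem, and they are the same per-edge estimates that drive the $d$-regular case of Theorem \ref{thrm:ex-infinite-loops}, now carried out with the random offspring number $X$ in place of the deterministic $d$.

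For part 2 (extinction) I would run a first-moment argument. Fix a realisation of $T^{X}_{r,n}$ and let $N_{m}$ be the number of generation-$m$ vertices that lie on a loop through the root, so that $\mathbb P^{\theta}_{T^{X}_{r,n}}[E^{r\to m}_{T^{X}_{r,n}}]\le \mathbb E^{\theta}_{T^{X}_{r,n}}[N_{m}]$. Because the graph is a tree, a loop reaching a generation-$m$ vertex must cross every edge along the unique root-to-vertex path, and on distinct edges the links come from independent Poisson processes, the only coupling being through $\theta^{\#\text{loops}}$. Using the per-edge estimate above along the path, summing over the generation-$m$ vertices, and then taking $\EGW$ and using the independence of offspring numbers across the tree, the contribution of each generation factorises and one is left with the $m$-th power of $\EGW\bigl[X\,\mathrm e^{-\beta X/\theta}\bigl(1+\frac{\mathrm e^{\beta\theta}-1}{\theta^{2}}\bigr)^{X-1}\bigr]\,\frac{\mathrm e^{\beta\theta}-1}{\theta^{2}}$. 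Since this quantity is $<1$ by hypothesis, $\Pqu[E^{r\to m}_{T^{X}_{r,n}}]\to 0$ as $m\to\infty$, uniformly in $n\ge m$, which is the claimed $\limsup$.

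For part 1 (survival) a first moment does not suffice, so I would instead compare the spreading loop from below with a supercritical branching process and close the argument with a second-moment (Paley--Zygmund) lower bound on $N_{m}$, controlling the two-point correlations by factorising along the shared-then-split paths of two generation-$m$ targets. Here the two hypotheses play complementary roles. The condition $\EGW[\mathrm e^{-\beta X/\theta}]\le 1-\varepsilon$ says that with annealed probability at least $\varepsilon$ the root's loop propagates to at least one child, so the process is non-degenerate at the first step and the correlation sum has a uniform base; to first order $\mathrm e^{-\beta/\theta}$ is the tilted probability that a single edge fails to carry a propagating crossing. The second condition is a sharpened lower bound on the per-generation mean: $\bigl(\frac{\theta^{2}+\beta\theta}{\theta^{2}+\mathrm e^{\beta\theta}-1}\bigr)^{X}$ lower-bounds the survival weight of a vertex, the subtracted term $\bigl(\mathrm e^{-\beta/\theta}(1+\frac{\beta}{\theta}(1-\varepsilon))\bigr)^{X}$ discounts the part already charged to the first condition, and their $\EGW$-difference being $\ge\varepsilon$ forces the effective offspring mean above $1$ with a uniform gap. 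Together these give $\inf_{n\ge m}\Pqu[E^{r\to m}_{T^{X}_{r,n}}]$ bounded below uniformly in $m$, hence the stated $\liminf$.

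The hard part throughout is that $\theta^{\#\text{loops}}$ is a global weight, so $\mathbb P^{\theta}$ does not factorise over edges and the branching picture above is only morally correct. The crux is to show that on a tree this weight acts essentially locally: adding or removing the links of one edge changes the loop count by a bounded and explicitly computable amount, so the Radon--Nikodym density of $\mathbb P^{\theta}$ against the independent reference measure $\rho$ can be localised edge by edge, with errors that remain summable along a root-to-leaf path. Making this localisation quantitative, and uniform in the truncation level $n$ since the theorem asserts bounds uniform in $n\ge m$, is the delicate point. A further complication specific to the Galton--Watson setting is that vertices can have arbitrarily large degree, so the per-edge estimates must be summed against the offspring distribution rather than a fixed $d$; this is precisely why the criteria appear as expectations $\EGW[\,\cdot\,]$ and why the sibling factors $\bigl(1+\frac{\mathrm e^{\beta\theta}-1}{\theta^{2}}\bigr)^{X-1}$ and $\mathrm e^{-\beta X/\theta}$ enter with the full random offspring number $X$.
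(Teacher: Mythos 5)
For part 2 your plan is essentially the paper's own argument: the paper (Lemma \ref{lem:GWexpdecay}) organises the first-moment bound as a recursion $\sigma^m_n \le \tilde q\,\sigma^{m-1}_{n-1}$ over generations rather than as a sum over generation-$m$ vertices, but the per-edge quantities are exactly your $\mathrm e^{-\beta X/\theta}\bigl(1+\frac{\mathrm e^{\beta\theta}-1}{\theta^2}\bigr)^{X-1}\frac{\mathrm e^{\beta\theta}-1}{\theta^2}$, and the factorisation you defer is supplied there by the loop-count bounds of Proposition \ref{prop:loopnumbers-subgraphs} together with the partition-function bounds of Corollary \ref{cor:part-func-estimation}.

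For part 1, however, there is a genuine gap, in fact two. First, the Paley--Zygmund route needs an upper bound on $\mathbb E^{\theta}_{T}[N_m^2]$ of the order of $\bigl(\mathbb E^{\theta}_{T}[N_m]\bigr)^2$, i.e.\ control of two-point events of the form ``the root's loop reaches both $u$ and $v$'' under the loop-weighted measure; nothing in your sketch provides this, and it is precisely the global-weight localisation that you acknowledge leaving open --- so the crux of the survival proof is missing rather than reduced to known inputs. Second, and more structurally, the hypotheses of the theorem are generating-function conditions, not moment conditions: they compare $\EGW[s^X]$ at two explicit points $s\in(0,1)$ and neither assert nor obviously imply that any ``effective offspring mean'' exceeds $1$ with a gap; moreover they are satisfied by offspring laws with $\EGW[X^2]=\infty$, for which the second-moment quantities entering Paley--Zygmund are uncontrolled after averaging over the Galton--Watson tree. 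The paper instead runs a fixed-point induction directly on probabilities: restrict to the event $A$ that every edge at the root carries at most one link --- this is the key device, since on $A$ the circle $\{r\}\times[0,\beta)$ is a single loop and propagation into the subtrees factorises exactly --- bound $\mathbb P^{\theta}_{T_0}(A)$ from below (Corollary \ref{cor:PofA-estimates}), derive the recursion of Lemmas \ref{lem:P-of-A-cap-B-recursion} and \ref{lem:GWzeta-recursion}, and then induct on $m$: the first hypothesis gives the base case $1-\zeta^1_n\ge\varepsilon$, and the second hypothesis is tailored exactly so that $\varepsilon$ is preserved by the recursion, i.e.\ $1-\zeta^{m-1}_{n-1}\ge\varepsilon$ implies $1-\zeta^m_n\ge\varepsilon$. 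Your reading of the second hypothesis as ``forcing the offspring mean above $1$'' misidentifies its role; without the single-link restriction (note that when an edge carries two or more links the root's loop is not a well-defined branching object, so your effective branching process is not even defined) and without recognising the hypotheses as closing a probability recursion, the argument as proposed cannot be completed.
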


To illustrate how to make use of the conditions within Theorem \ref{thrm:GWloops} we have the following example.
\begin{ex}\label{ex:poisson-offspring}
	Consider Poisson distributed offspring $X \sim \operatorname{Poi}(\mu)$ with $\mu>0$.
	\begin{enumerate}
		\item Let us fix $a>\theta$ and choose $\varepsilon\leq \frac{1}{2}$ such that 
		$1-\exp\left(-\frac{a}{\theta}\varepsilon\right)>\varepsilon$. Then for $\beta:=\frac{a}{\mu}$ it holds
		\begin{align}
		&\EGW\left[ \left(\frac{\theta^2+\beta\theta}{\theta^2+\mathrm{e}^{\beta\theta}-1}\right)^X
		-\left(\mathrm{e}^{-\frac{\beta}{\theta}}\left(1+\frac{\beta}{\theta}(1-\varepsilon)\right)\right)^X\right]
		\\
		&=\mathrm{e}^{-\mu\left(1-\frac{\theta^2+\beta\theta}{\theta^2+\mathrm{e}^{\beta\theta}-1}\right)}-
		\mathrm{e}^{-\mu\big(1-\mathrm{e}^{-\frac{\beta}{\theta}}\big(1+\frac{\beta}{\theta}(1-\varepsilon)\big)\big)}
		\\
		&
		\underset{\mathclap{\beta=a/\mu}}{\overset{\mathclap{\mu \to \infty}}{\longrightarrow}} \; 1-\exp\left(-\frac{a}{\theta}\varepsilon\right).
		\end{align}
		Now by choosing $\mu \geq \frac{a}{\theta}$ large enough and estimating
		\begin{align}
		&\EGW\left[\mathrm{e}^{-\frac{\beta}{\theta}X}\right]=\exp\left(-\mu\left(1-\mathrm{e}^{-\frac{\beta}{\theta}}\right)\right)
		\\
		&\leq \exp\left(-\frac{\mu\frac{\beta}{\theta}}{1+\frac{\beta}{\theta}}\right)
		\overset{\beta \leq \theta,}{\underset{\frac{1}{2}\geq\varepsilon}{\leq}} 
		\exp\left(-\frac{a}{\theta}\varepsilon\right)<1-\varepsilon
		\end{align}
		we see that -- with positive probability -- there are long loops.
		\item On the other hand, by calculating
		\begin{align}
		&\EGW\left[X \mathrm{e}^{-\frac{\beta}{\theta} X}\left(1+\frac{\mathrm{e}^{\beta\theta}-1}{\theta^2}\right)^{X-1} \right] 
		\frac{\mathrm{e}^{\beta\theta}-1}{\theta^2}
		\\
		&= \mu \mathrm{e}^{-\frac{\beta}{\theta}}\frac{\mathrm{e}^{\beta\theta}-1}{\theta^2}
		\exp\left[\mu\left(\mathrm{e}^{-\frac{\beta}{\theta}}
		\left(1+\frac{\mathrm{e}^{\beta\theta}-1}{\theta^2}\right)-1\right)\right] 
		\overset{\beta \to 0}{\longrightarrow} 0
		\end{align}
		for fixed $\mu$ we see that there is a $\beta_0=\beta_0(\mu,\theta)$ such that for all $\beta \leq \beta_0$ there almost surely are no infinite loops.
	\end{enumerate}
\end{ex}

The following corollary gives further sufficient conditions for the distribution of $X$ and ranges of $\beta$ where Theorem \ref{thrm:GWloops} is applicable. In particular, we will be able to deduce Theorem \ref{thrm:ex-infinite-loops} from this corollary by considering a deterministic offspring distribution.

\newpage
\begin{cor} $~$\label{cor:GWloops-large-expectation}
	\vspace{-6pt}
	\begin{enumerate}
		\item 
		Fix $b\geq a > \theta$ and let $X$ be an integrable offspring distribution with $\PGW[X>0]>0$. 
		Furthermore, choose $c_2 \geq c_1>0$ such that $B_X:=\left[c_1\leq \frac{X}{\EGW[X]}\leq c_2\right]$ has positive probability.
		Then there is a $\lambda_0\in \mathbb N$ such that for all $\lambda \in \mathbb N$ with $\lambda \geq \lambda_0$ on 
		the \GW tree with rescaled offspring distribution $\lambda \cdot X$ and for every 
		$\beta \in \frac{1}{\EGW[\lambda X]} \frac{1}{c_1 \PGW(B_X)} [a,b]$ we have
		\begin{align*}
		\liminf_{m \to \infty} \inf_{n \geq m} \mathbf{P}^{\theta,\lambda\cdot X}_{n} [E^{r\to m}_{T^{\lambda \cdot X}_{r,n}}]>0.
		\end{align*}
		\item Fix $q<\theta$. Then there is a $d_0 \in \mathbb N$ such that for every $d \geq d_0$, every offspring distribution $X$ bounded above by $d$ and for all $\beta \leq \frac{q}{d}$ it holds
		\begin{align*}
		\limsup_{m \to \infty} \sup_{n \geq m} \Pqu [E^{r\to m}_{T^X_{r,n}}]=0.
		\end{align*}
	\end{enumerate}
\end{cor}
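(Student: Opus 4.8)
The plan is to derive both parts of the corollary by verifying the hypotheses of Theorem \ref{thrm:GWloops} for the relevant offspring distribution and range of $\beta$; all the content lies in these two verifications. Throughout I write $\mu:=\EGW[X]$ and $p:=\PGW(B_X)$, and I use that on $B_X$ one has $X\ge c_1\mu$, hence $X\ge 1$.

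For part 2 I would check the single inequality in Theorem \ref{thrm:GWloops}(2) for an offspring distribution bounded by $d$ and $\beta\le q/d$. Writing $s:=(\mathrm e^{\beta\theta}-1)/\theta^2$, the idea is to keep the two exponential factors together,
\begin{equation*}
s\,\EGW\!\Big[X\,\mathrm e^{-\frac\beta\theta X}(1+s)^{X-1}\Big]=\frac{s}{1+s}\,\EGW\!\big[X\,\kappa^{X}\big],\qquad \kappa:=\mathrm e^{-\beta/\theta}(1+s).
\end{equation*}
The point is that the first-order terms cancel: since $s=\beta/\theta+(\mathrm e^{\beta\theta}-1-\beta\theta)/\theta^2$ and $\mathrm e^{y}-1-y\le\tfrac{y^2}2\mathrm e^{y}$, one gets $\log\kappa\le\tfrac{\beta^2}2\mathrm e^{\beta\theta}=O(\beta^2)$, so $\kappa^{X}\le\mathrm e^{cd\beta^2}\le\mathrm e^{cq^2/d}\to1$ uniformly because $X\le d$ and $\beta\le q/d$. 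Bounding $\EGW[X]\le d$ then yields an upper bound of the form $sd\,\mathrm e^{cq^2/d}(1+o(1))$, and since $sd\le d(\mathrm e^{q\theta/d}-1)/\theta^2\to q/\theta<1$ this is $<1$ for all large $d$, which fixes $d_0$.

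For part 1 I would first fix $\varepsilon>0$ small enough, depending only on $a,\theta,p$ (not on $\lambda$), so that simultaneously $\varepsilon\le p\big(1-\mathrm e^{-a/(\theta p)}\big)$ and $p\big(1-\mathrm e^{-\varepsilon a/(\theta p)}\big)>\varepsilon$; both hold for small $\varepsilon$ precisely because $a>\theta$, since the derivative of $\varepsilon\mapsto p(1-\mathrm e^{-\varepsilon a/(\theta p)})$ at $0$ equals $a/\theta>1$. Setting $\gamma:=\beta\lambda\in[\,a/(\mu c_1p),\,b/(\mu c_1p)\,]$, the prescribed range of $\beta$ means exactly that $\gamma X\ge a/p$ on $B_X$. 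For the first hypothesis this gives $\EGW[\mathrm e^{-\frac\beta\theta\lambda X}]=\EGW[\mathrm e^{-\frac\gamma\theta X}]\le 1-p(1-\mathrm e^{-a/(\theta p)})\le 1-\varepsilon$, uniformly in $\lambda$. For the second hypothesis I would bound the two terms one-sidedly: using $\frac{\theta^2+\beta\theta}{\theta^2+\mathrm e^{\beta\theta}-1}\ge1-\tfrac{\beta^2}2\mathrm e^{\beta\theta}$ and Bernoulli's inequality, the first term has expectation $\ge1-C/\lambda$ with $C$ bounded uniformly over $\gamma$; using $1+y\le\mathrm e^{y}$, the second term is $\le\mathrm e^{-\frac\varepsilon\theta\gamma X}$, whose expectation is $\le1-p(1-\mathrm e^{-\varepsilon a/(\theta p)})$. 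Subtracting, the left-hand side of the second hypothesis is $\ge p(1-\mathrm e^{-\varepsilon a/(\theta p)})-C/\lambda=\varepsilon+\delta-C/\lambda\ge\varepsilon$ once $\lambda\ge\lambda_0:=C/\delta$, where $\delta:=p(1-\mathrm e^{-\varepsilon a/(\theta p)})-\varepsilon>0$. Theorem \ref{thrm:GWloops}(1) then applies for every $\beta$ in the stated range.

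The main obstacle is part 1. Because $\beta\sim1/\lambda$, the natural quantity $\beta\lambda X=\gamma X$ stays bounded rather than tending to infinity, so the two competing terms in the second hypothesis both tend to $1$ and one cannot estimate the factors $\mathrm e^{-\beta X/\theta}$ and $(1+\tfrac{\mathrm e^{\beta\theta}-1}{\theta^2})^{\lambda X}$ separately without losing a multiplicative constant that destroys the bound. The delicate step is therefore to exploit the exact first-order cancellations, which leave behind precisely the factor $\mathrm e^{-\varepsilon\gamma X/\theta}$, and to make all estimates uniform over $\beta$ in the interval and over $\lambda\ge\lambda_0$; the explicit one-sided bounds above are designed for exactly this, with the strict inequality $a>\theta$ entering at the choice of $\varepsilon$.
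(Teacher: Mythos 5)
Your proposal is correct, and at the top level it follows the same strategy as the paper: both parts are reduced to checking the hypotheses of Theorem \ref{thrm:GWloops}. Your part 2 is essentially the paper's proof --- the paper likewise keeps the exponentials together, using $\mathrm{e}^{-X\beta/\theta}\left(1+\tfrac{\mathrm{e}^{\beta\theta}-1}{\theta^2}\right)^{X-1}\leq\exp\left(X\tfrac{\mathrm{e}^{\beta\theta}-\beta\theta-1}{\theta^2}\right)$, bounds $X\leq d$, and sends $d\to\infty$ to reach the same limit $q/\theta<1$. In part 1, however, your verification of the second hypothesis is genuinely different in execution. The paper estimates both integrands \emph{pointwise on the event} $B_X$: the first term via $\left(1+\theta^{-2}\sum_{k\geq2}(\beta\theta)^k/k!\right)^{-X_\lambda}\geq\left(1+\theta^{-2}\sum_{k\geq2}\left(\tfrac{b\theta}{c_1\PGW(B_X)\bar d_\lambda}\right)^k/k!\right)^{-c_2\bar d_\lambda}\to1$, which is exactly where the upper cutoff $c_2$ enters, and the second term via $\mathrm{e}^{-x}(1+x)\leq1$ just as you do; it then lower-bounds the full expectation by the pointwise gap times $\PGW(B_X)$. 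That last restriction step tacitly uses that the integrand is nonnegative off $B_X$ (true for small $\beta$, i.e.\ large $\lambda$, but not checked in the paper). You instead bound the two expectations \emph{globally} and subtract by linearity: Bernoulli's inequality together with integrability of $X$ gives $\EGW\left[\left(\tfrac{\theta^2+\beta\theta}{\theta^2+\mathrm{e}^{\beta\theta}-1}\right)^{\lambda X}\right]\geq1-C/\lambda$, while the $B_X$/complement split gives $\EGW\left[\mathrm{e}^{-\varepsilon\gamma X/\theta}\right]\leq1-p\left(1-\mathrm{e}^{-\varepsilon a/(\theta p)}\right)$. This buys two things: the sign of the integrand off $B_X$ never matters, and $c_2$ is never used, so your argument shows the upper cutoff in $B_X$ is actually superfluous for the conclusion (consistent with, and explaining, the paper's remark that larger $c_2$ only makes $\lambda_0$ larger); the paper's version, in exchange, keeps all error terms supported on $B_X$ at the natural scale $\bar d_\lambda$. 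One small detail to fold into your definition of $\lambda_0$: Bernoulli's inequality $(1-x)^n\geq1-nx$ needs $x=\tfrac{\beta^2}{2}\mathrm{e}^{\beta\theta}\leq1$, which holds once $\lambda$ exceeds a constant depending only on $b,\theta,\EGW[X],c_1,p$; this is harmless but should be stated when you set $\lambda_0:=\max\{C/\delta,\,\cdot\,\}$.
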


Note that within the first part of this corollary larger choices of $c_2$ will make $\lambda_0$ larger. 
Moreover, for given $X$ and $c_2$ one should seek to maximize the quantity $c_1 \PGW(B_X)\leq 1$ to show existence 
of long loops for $\beta$ just above $\frac{\theta}{\EGW[\lambda X]}$.

\section{Proofs}

For the next three results we will consider the following setting and notation:
\begin{align}\label{eq:tree-subtrees-setting}
\begin{split}
&\text{Let $T_0$ be a fixed finite tree rooted in $r$ and denote the children}\\
&\text{of $r$ by $x_1,\ldots,x_d$. Furthermore, write $T_j$ for the subtree of $T_0$}\\
&\text{rooted in $x_j$, $j=1,\ldots,d$.}
\end{split}
\end{align}

We begin by giving estimates on quantities of $T_0$ in terms of the corresponding quantities on its subtrees $T_j$. In particular, Proposition \ref{prop:loopnumbers-subgraphs} deals with the number $L_{T_0}$ of loops and Corollary \ref{cor:part-func-estimation} with estimating the partition function of our model.

This will enable us to prove occurrence of long loops with the main ingredients being the recursive estimation in Lemma \ref{lem:P-of-A-cap-B-recursion} and a certain self-similarity argument within the proof of Lemma \ref{lem:GWzeta-recursion}.

Similarly, to show absence of long loops we will prove exponential decay of the probability for a loop to reach generation $m$ by a recursive argument and using the same kind of self-similarity as above in Lemma \ref{lem:GWexpdecay}.

The following proposition is a basic observation and we will make use of it multiple times.
\begin{prop}\label{prop:loopnumbers-subgraphs}
	Given (\ref{eq:tree-subtrees-setting}), we have
	\begin{align}
	-\sum_{j=1}^d N_\beta^{\{r,x_j\}} \leq L_{T_0}- \bigg(\sum_{j=1}^d L_{T_j} + 1 \bigg) \leq \sum_{j=1}^d \big(|N_\beta^{\{r,x_j\}}-1|-1\big). \label{eq:loopnumbers-subgraphs}
	\end{align}
\end{prop}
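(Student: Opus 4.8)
The plan is to build the loop configuration on $T_0$ out of the configurations on the subtrees by inserting the links lying on the root edges $\{r,x_1\},\dots,\{r,x_d\}$ one at a time, and to track how the total number of loops moves at each insertion. First I would fix the base case. Remove every link carried by a root edge $\{r,x_j\}$, keeping only the links inside the subtrees. Since the loop construction restricted to $T_j$ depends only on the links on $\mathcal E(T_j)$, each subtree then contributes exactly its own loops $\mathcal L_{T_j}$, while the root $r$, being disconnected from every $T_j$, carries a single loop consisting of its own time-circle $\{r\}\times[0,\beta)_{\mathrm{per}}$. Hence in this root-link-free configuration the loop count equals $\sum_{j=1}^d L_{T_j}+1$, which is precisely the reference quantity appearing in the proposition.

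Next I would invoke the elementary fact that inserting a single link re-pairs two wire segments and therefore changes the total number of loops by exactly $\pm 1$: it is $-1$ when the two reconnected segments lie on distinct loops (a merge) and $+1$ when they lie on the same loop (a split), a dichotomy that is insensitive to whether the link is a $\cross$ or a $\dbars$. Inserting all $\sum_j N_\beta^{\{r,x_j\}}$ root-edge links thus moves the loop count by at most that many units in absolute value, which already delivers the lower bound $L_{T_0}-\big(\sum_j L_{T_j}+1\big)\ge -\sum_j N_\beta^{\{r,x_j\}}$.

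For the sharper upper bound I would insert the links edge by edge, all links of $\{r,x_1\}$ first, then those of $\{r,x_2\}$, and so on. The key observation, and the main point of the proof, is that the \emph{first} link placed on any edge $\{r,x_j\}$ with $N_\beta^{\{r,x_j\}}\ge 1$ is necessarily a merge. Indeed, just before it is inserted the current configuration carries no link on $\{r,x_j\}$, and since $T_0$ is a tree this edge is the unique edge joining $T_j$ to the rest of $T_0$; consequently no loop can meet both the $x_j$-side and the $r$-side, so the two wire segments joined by the link lie on different loops and the insertion lowers the count by one. The remaining $N_\beta^{\{r,x_j\}}-1$ links on that edge can do no worse than split, contributing at most $+1$ each. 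Summing per edge, an active edge with $k:=N_\beta^{\{r,x_j\}}\ge 1$ changes the count by at most $-1+(k-1)=k-2=|k-1|-1$, while an inactive edge ($k=0$) contributes $0=|0-1|-1$; this is exactly $\sum_j\big(|N_\beta^{\{r,x_j\}}-1|-1\big)$, yielding the right-hand inequality.

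The only delicate step is the merge claim for the first link on each edge. It must be phrased as a statement about the partially built configuration (all subtree links together with the already-processed root edges), and one has to verify that processing earlier root edges cannot connect $r$ to $T_j$ across the still-empty edge $\{r,x_j\}$ — which again follows from the uniqueness of the tree-path through $\{r,x_j\}$. Care is also needed so that the $\pm1$ accounting is consistently referred to the evolving configuration throughout the insertion; the $\cross$/$\dbars$ distinction, by contrast, plays no role.
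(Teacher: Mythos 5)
Your overall strategy is exactly the paper's: start from the configuration with no links on the root edges, where the loop count is $\sum_{j=1}^d L_{T_j}+1$; insert the root-edge links one at a time; observe that the first link on each edge $\{r,x_j\}$ is forced to be a merge because that edge is the unique connection between $T_j$ and the rest of the tree; and bound the effect of every further link by one. Your justification of the merge step (no loop can meet both sides of a still-empty root edge, even after other root edges have been processed) is correct and in fact spelled out more carefully than in the paper, which simply asserts it.

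However, one of your stated claims is false, and it is precisely the point you flag as ``the elementary fact'': a single link does \emph{not} always change the number of loops by exactly $\pm 1$, and the dichotomy is \emph{not} insensitive to the $\cross$/$\dbars$ distinction. When the two rewired strands lie on the same loop, the insertion may split that loop into two ($+1$) \emph{or} rewire it into a single loop ($0$), depending on the orientation with which the loop traverses the two endpoints --- and this is exactly where crosses and bars differ. Concretely: put a single $\cross$ on an edge $\{x,y\}$ at time $t_1$ (the two time-circles merge into one loop), then add a $\dbars$ at time $t_2$ on the same edge; tracing the wiring shows the configuration still consists of one single loop, so the second link changes the count by $0$. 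This is why the paper's proof lists three cases --- merge, split, or ``alter a single loop'' --- rather than two. Fortunately the error is harmless for the inequality: your lower bound only needs that each insertion decreases the count by at most $1$, and your upper bound only needs that the first link on each edge is exactly a merge and that each subsequent link increases the count by at most $1$; all of these survive when ``exactly $\pm1$'' is weakened to ``by at most one in absolute value.'' With that single correction your proof is complete and coincides with the paper's.
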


Note that by $L_{T_j}(\omega)$ we mean the number of loops on $T_j$ in a realisation $\omega$, i.e.\ only links on the edges $e \in \mathcal E(T_j)$ are taken into account for constructing these loops. \\
Furthermore, (\ref{eq:loopnumbers-subgraphs}) becomes an equality for $N_{\beta}^{\{r,x_j\}}\in \{0,1\}$.
\begin{proof}
The result is immediate once we understood how adding a link to a realisation changes the number of loops. If there are no links between $r$ and its children the number of loops is given by
\begin{align}
	L_{T_0} = \sum_{j=1}^d L_{T_j} + 1. 
	\end{align}
Adding 
the first link to an edge will merge the loops at its end points, reducing the number of loops by $1$. Hence, we have $L_{T_0}=\sum_{j=1}^d L_{T_j}+1-\sum_{j=1}^d {N_\beta^{\{r,x_j\}}}$ in the case that $N_\beta^{\{r,x_j\}}\in\{0,1\}$ holds for all $j$.\\
Any further link may either merge two loops, split a loop into two or alter a single loop. Therefore the number of loops changes by at most one and the result follows. 
\end{proof}
The previous estimations on the loop numbers enable us to give estimates on the partition function:
\begin{cor}\label{cor:part-func-estimation}
	In the setting of (\ref{eq:tree-subtrees-setting})
	we obtain the following bounds
\begin{flalign}
&& \mathbb E_{T_0}[\theta^{L_{T_0}}] &\geq 
\theta \mathrm{e}^{-\beta d+\beta d/\theta} \prod_{j=1}^d \mathbb E_{T_j}\left[\theta^{L_{T_j}}\right] & \label{eq:part-func-est-below}
\\&\text{and}& \mathbb E_{T_0}[\theta^{L_{T_0}}]
&\leq \theta \mathrm{e}^{-\beta d}\left(1+\frac{\mathrm{e}^{\beta \theta}-1}{\theta^2}\right)^d\prod_{j=1}^d \mathbb E_{T_j}[\theta^{L_{T_j}}].
\label{eq:part-func-est-above}
\end{flalign}
\end{cor}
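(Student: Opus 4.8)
The plan is to exponentiate the two-sided bound of Proposition \ref{prop:loopnumbers-subgraphs} and then integrate, exploiting the product structure of the reference measure $\rho_{T_0}$. Since $\theta \geq 1$, the map $x \mapsto \theta^x$ is increasing, so applying it to (\ref{eq:loopnumbers-subgraphs}) preserves the direction of both inequalities and gives, pointwise in $\omega$,
\[
\theta\,\theta^{\sum_j L_{T_j}}\,\theta^{-\sum_j N_\beta^{\{r,x_j\}}} \;\leq\; \theta^{L_{T_0}} \;\leq\; \theta\,\theta^{\sum_j L_{T_j}}\,\theta^{\sum_j\left(|N_\beta^{\{r,x_j\}}-1|-1\right)}.
\]

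The crucial structural observation is that each $L_{T_j}$ depends only on the links carried by edges in $\mathcal{E}(T_j)$, whereas the counts $N_\beta^{\{r,x_j\}}$ refer to the root edges $\{r,x_j\}$, which lie in none of the subtrees. Under $\rho_{T_0}$ the edge processes are independent, so the variables $(L_{T_j})_{j=1}^d$ together with $(N_\beta^{\{r,x_j\}})_{j=1}^d$ are mutually independent. Taking $\mathbb E_{T_0}$ of both displayed inequalities and factorising over $j$ therefore yields
\[
\mathbb E_{T_0}[\theta^{L_{T_0}}] \;\geq\; \theta\prod_{j=1}^d \mathbb E_{T_j}[\theta^{L_{T_j}}]\cdot\prod_{j=1}^d \mathbb E\!\left[\theta^{-N_\beta^{\{r,x_j\}}}\right],
\]
and the matching upper bound with $\theta^{-N}$ replaced by $\theta^{|N-1|-1}$.

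It remains to evaluate the two single-edge expectations. Each $N_\beta^{\{r,x_j\}} = N^{e,\cross}+N^{e,\dbars}$ is Poisson with mean $u\beta+(1-u)\beta=\beta$. For the lower bound the probability generating function gives directly $\mathbb E[\theta^{-N}] = \mathrm e^{\beta(1/\theta-1)} = \mathrm e^{-\beta+\beta/\theta}$, and taking the product over the $d$ children produces the factor $\mathrm e^{-\beta d+\beta d/\theta}$ of (\ref{eq:part-func-est-below}). For the upper bound I would split on the value of $N$: since $|N-1|-1$ equals $0$ when $N=0$ and equals $N-2$ when $N\geq 1$, summing the Poisson weights gives
\[
\mathbb E\!\left[\theta^{|N-1|-1}\right] = \mathrm e^{-\beta} + \theta^{-2}\mathrm e^{-\beta}\sum_{k\geq 1}\frac{(\beta\theta)^k}{k!} = \mathrm e^{-\beta}\Big(1+\frac{\mathrm e^{\beta\theta}-1}{\theta^2}\Big),
\]
whose $d$-th power is exactly the prefactor in (\ref{eq:part-func-est-above}).

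The argument is essentially mechanical once the independence of the root-edge counts from the subtree loop numbers is in place; the only point requiring genuine care is the case distinction at $N=0$ in the final display, which is what produces the $+1$ inside the bracket rather than a clean exponential factor. I expect no real obstacle beyond bookkeeping, since Proposition \ref{prop:loopnumbers-subgraphs} already carries all the combinatorial content.
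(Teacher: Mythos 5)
Your proof is correct and follows essentially the same route as the paper's: both rest on Proposition \ref{prop:loopnumbers-subgraphs} combined with the independence of the root-edge link counts from the subtree loop numbers, followed by the Poisson computations $\mathbb E[\theta^{-N}]=\mathrm e^{-\beta+\beta/\theta}$ and $\mathbb E[\theta^{|N-1|-1}]=\mathrm e^{-\beta}\bigl(1+\tfrac{\mathrm e^{\beta\theta}-1}{\theta^2}\bigr)$. The paper phrases this by summing over the configurations $\{N_\beta^{\{r,x_j\}}=n_j\ \forall j\}$ rather than by your pointwise exponentiation (using $\theta\geq 1$), but the two presentations are mathematically identical.
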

Note that we assumed $\theta \geq 1$ in the definition of our model. One could define $\mathbb P^\theta_{\Gfin}$ in the same way for $0<\theta \leq 1$ and up to now this would only exchange the upper and lower bound for the partition function in (\ref{eq:part-func-est-below}) and (\ref{eq:part-func-est-above}).
\begin{proof}
	We calculate using the lower bound from Proposition \ref{prop:loopnumbers-subgraphs}:
\begin{align}
	&\mathbb E_{T_0}[\theta^{L_{T_0}}] 
	= \sum_{n_1\in \mathbb N_0} \cdots \sum_{n_d\in \mathbb N_0} \mathbb E_{T_0} \left [  \mathbf{1}{\left[N_\beta^{\{r,x_j\}}=n_j \, \forall j\right]} \cdot \theta^{L_{T_0}} \right ] 
	\\
	&\geq \sum_{n_1\in \mathbb N_0} \cdots \sum_{n_d\in \mathbb N_0} \mathbb E_{T_0} \left [  \prod_{j=1}^d \left(  \mathbf{1}{\left[N_\beta^{\{r,x_j\}}=n_j\right]}\right) \prod_{j=1}^d \left(\theta^{L_{T_j}}\right) \right ] \theta^{-\sum_{j=1}^d n_j +1 }.
\end{align}
Now the factors in the expectation are independent, therefore we obtain
\begin{align}
	\mathbb E_{T_0}[\theta^{L_{T_0}}]\geq& \theta \cdot \prod_{j=1}^d \left(\sum_{n_j\in \mathbb N_0} \rho_{T_0}\left[N_\beta^{\{r,x_j\}}=n_j\right] \mathbb E_{T_0} \left[ \theta^{L_{T_j}} \right] \theta^{-n_j} \right)\\
	=& \theta \cdot \left( \sum_{k \in \mathbb N_0} \frac{\beta^k}{k!}\mathrm{e}^{-\beta} \theta^{-k} \right)^d \prod_{j=1}^d \mathbb E_{T_0}\left[\theta^{L_{T_j}}\right]\\
	=& \theta \cdot \mathrm{e}^{-\beta d+\beta d/\theta} \prod_{j=1}^d \mathbb E_{T_j}\left[\theta^{L_{T_j}}\right].
\end{align}
Here we used that by the construction of the model we have $\mathbb E_{T_0}[\theta^{L_{T_j}}]=\mathbb E_{T_j}[\theta^{L_{T_j}}]$ as $L_{T_j}$ only depends on the links on $\mathcal E(T_j)$.\\
The upper bound on the partition function follows similarly from the second inequality in (\ref{eq:loopnumbers-subgraphs}):
\begin{align}
\mathbb E_{T_0}[\theta^{L_{T_0}}] &\leq \sum_{n_1\in \mathbb N_0} \cdots \sum_{n_d\in \mathbb N_0} \prod_{j=1}^d \left( \rho_{T_0}\left[N_\beta^{\{r,x_j\}}=n_j\right] \mathbb E_{T_0}\left[\theta^{L_{T_j}}\right] \theta^{|n_j-1|-1} \right) \theta 
\\
&=\theta \mathrm{e}^{-\beta d}\left(1+\sum_{k=1}^\infty \frac{\beta^k}{k!}\theta^{k-2}\right)^d \prod_{j=1}^d \mathbb E_{T_j}[\theta^{L_{T_j}}]
\\
&=\theta \mathrm{e}^{-\beta d}\left(1+\frac{\mathrm{e}^{\beta \theta}-1}{\theta^2}\right)^d\prod_{j=1}^d \mathbb E_{T_j}[\theta^{L_{T_j}}].
\end{align}
\end{proof}

The next corollary gives estimates on three events that are of particular interest.
\begin{cor}\label{cor:PofA-estimates}
	Given (\ref{eq:tree-subtrees-setting}) we define $A_\emptyset:=[N_\beta^{\{r,x_j\}}=0 \; \forall j\leq d]$
	and
	\begin{align} 
		A &:=[N_\beta^{\{r,x_j\}}\leq 1 \; \forall j\leq d]. \label{eq:def-A}
	\end{align}
	Then
	\begin{align}
	\mathbb P_{T_0}^\theta(A_\emptyset) \leq \mathrm{e}^{-\beta d/\theta} \qquad \text{and} \qquad \mathbb P_{T_0}^\theta(A) \geq \left(\frac{\theta^2+\beta \theta}{\theta^2+\mathrm{e}^{\beta\theta}-1}\right)^d.
	\end{align}
	\begin{proof}
		Using (\ref{eq:loopnumbers-subgraphs}) and (\ref{eq:part-func-est-below}) we can estimate
		\begin{align}
		\mathbb P_{T_0}^\theta(A_\emptyset) &=\frac{\mathbb E_{T_0}\left[\mathbf{1}_{A_\emptyset} \theta^{L_{T_0}}\right]}{\mathbb E_{T_0}\left[\theta^{L_{T_0}}\right]} 
		\leq \frac{\theta (\mathrm{e}^{-\beta})^d \prod_{j=1}^d \mathbb E_{T_j}[\theta^{L_{T_j}}]}{\theta \mathrm{e}^{-\beta d+\beta d/\theta}\prod_{j=1}^d \mathbb E_{T_j}[\theta^{L_{T_j}}]} = \mathrm{e}^{-\beta d/\theta}.
		\end{align}
		Similarly, from (\ref{eq:loopnumbers-subgraphs}) and (\ref{eq:part-func-est-above})
		we have
		\begin{align}
		\mathbb P_{T_0}^\theta(A) 
		&\geq \frac{\theta\left(\mathrm{e}^{-\beta} + \mathrm{e}^{-\beta} \beta \theta^{-1}\right)^d \prod_{j=1}^d \mathbb E_{T_j}\left[\theta^{L_{T_j}}\right] }{\theta \mathrm{e}^{-\beta d}\left(1+\frac{\mathrm{e}^{\beta\theta}-1}{\theta^2}\right)^d \prod_{j=1}^d \mathbb E_{T_j}[\theta^{L_{T_j}}]}
		\\
		&=\left(\frac{\theta^2+\beta \theta}{\theta^2+\mathrm{e}^{\beta\theta}-1}\right)^d.
		\end{align}		
	\end{proof}
\end{cor}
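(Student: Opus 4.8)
The plan is to compute both estimates directly from the definition \eqref{eq:P-theta-definition}, exploiting the observation (the remark following Proposition \ref{prop:loopnumbers-subgraphs}) that on each of the two events the inequality \eqref{eq:loopnumbers-subgraphs} collapses to an equality. Indeed, on $A_\emptyset$ and on $A$ every root edge carries at most one link, so Proposition \ref{prop:loopnumbers-subgraphs} gives the exact identity $L_{T_0} = \sum_{j=1}^d L_{T_j} + 1 - \sum_{j=1}^d N_\beta^{\{r,x_j\}}$. This will let me evaluate the numerator $\mathbb E_{T_0}[\mathbf 1 \theta^{L_{T_0}}]$ exactly, whereas the denominator $\mathbb E_{T_0}[\theta^{L_{T_0}}]$ is only ever controlled by the one-sided estimates of Corollary \ref{cor:part-func-estimation}. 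The guiding principle is that the subtree partition functions $\prod_{j=1}^d \mathbb E_{T_j}[\theta^{L_{T_j}}]$ and the prefactor $\theta$ will cancel between numerator and denominator, leaving a clean closed-form product.

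For $\mathbb P_{T_0}^\theta(A_\emptyset)$ the identity reduces to $L_{T_0} = \sum_{j} L_{T_j} + 1$. Since the links on the root edges are independent of one another and of all links inside the subtrees, the numerator factorizes as $\theta \prod_{j=1}^d \rho_{T_0}[N_\beta^{\{r,x_j\}}=0]\,\mathbb E_{T_j}[\theta^{L_{T_j}}]$, where $\rho_{T_0}[N_\beta^{\{r,x_j\}}=0] = \mathrm e^{-\beta}$ because $N_\beta^{\{r,x_j\}}$ is Poisson with parameter $\beta$. To turn this into an upper bound for the probability I divide by the \emph{lower} bound \eqref{eq:part-func-est-below} on the denominator; after the cancellations the surviving factor $\mathrm e^{-\beta d}/\mathrm e^{-\beta d + \beta d/\theta} = \mathrm e^{-\beta d/\theta}$ is exactly the claimed bound.

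For $\mathbb P_{T_0}^\theta(A)$ the same factorization applies, but now each root edge contributes $\sum_{n\in\{0,1\}} \rho_{T_0}[N_\beta^{\{r,x_j\}}=n]\,\theta^{-n} = \mathrm e^{-\beta}\bigl(1 + \tfrac{\beta}{\theta}\bigr)$, using $\rho_{T_0}[N_\beta^{\{r,x_j\}}=1] = \beta\mathrm e^{-\beta}$ and the weight $\theta^{-n}$ read off from the $-\sum_j N_\beta^{\{r,x_j\}}$ term of the exact identity. Since I now want a \emph{lower} bound on the probability, I divide this exact numerator by the \emph{upper} bound \eqref{eq:part-func-est-above} on the partition function. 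After cancellation I am left with $\bigl((1+\tfrac{\beta}{\theta})/(1 + \tfrac{\mathrm e^{\beta\theta}-1}{\theta^2})\bigr)^d$, which simplifies to $\bigl((\theta^2+\beta\theta)/(\theta^2+\mathrm e^{\beta\theta}-1)\bigr)^d$ upon multiplying numerator and denominator inside the bracket by $\theta^2$.

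The steps requiring care are purely organizational rather than analytically deep: matching each probability estimate with the correctly oriented partition-function bound (the lower bound \eqref{eq:part-func-est-below} for the upper estimate on $\mathbb P_{T_0}^\theta(A_\emptyset)$, and the upper bound \eqref{eq:part-func-est-above} for the lower estimate on $\mathbb P_{T_0}^\theta(A)$), and justifying the factorization of the numerator. The latter rests on the independence of the Poisson processes on distinct edges together with the fact that $L_{T_j}$ depends only on links inside $T_j$, so that $\mathbb E_{T_0}[\theta^{L_{T_j}}] = \mathbb E_{T_j}[\theta^{L_{T_j}}]$; this is precisely what makes the subtree factors cancel. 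Beyond bookkeeping of these inequality directions, no genuine obstacle arises.
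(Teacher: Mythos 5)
Your proposal is correct and follows essentially the same route as the paper: evaluate the numerator exactly on the event (where \eqref{eq:loopnumbers-subgraphs} is an equality) via independence and the Poisson probabilities $\mathrm e^{-\beta}$ and $\beta\mathrm e^{-\beta}$, then divide by the appropriately oriented partition-function bound from Corollary \ref{cor:part-func-estimation} so that the subtree factors and the prefactor $\theta$ cancel. The only cosmetic difference is that you make explicit that the numerator computations are exact identities, while the paper states them as one-sided bounds; the content is identical.
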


\subsection{Occurrence of long loops}\label{sec:longloops}
To prove existence of long loops we construct a recursive estimation for the probability to reach level $m$. The following lemma provides a key relation to obtain this.
\begin{lem}\label{lem:P-of-A-cap-B-recursion}
In the setting of (\ref{eq:tree-subtrees-setting}) let us consider the events
	\begin{align}
	\begin{split}
	B_{T_j}^{x \not \to m}:=\big \{\omega : &\text{ There is a loop in }\mathcal L_{T_j}(\omega)\text{ containing $(x,s)$ for }\\
	&\text{some $s\in[0,\beta)_\textnormal{per}$ that fails to reach generation }m \big \}\label{eq:B-def}
	\end{split}
	\end{align}
	with $x \in \mathcal V(T_j)$, $m\in \mathbb N$ and $0\leq j \leq d$. Then
	\begin{align}
	\mathbb P^\theta_{T_0}(A \cap B_{T_0}^{r \not \to m}) \leq \; \sum_{\mathclap{J \subseteq \{1,\ldots,d\}}} \; \mathrm{e}^{-\beta d/\theta} \left(\frac{\beta}{\theta}\right)^{|J|} \prod_{j\in J} \mathbb P^\theta_{T_j}(B_{T_j}^{x_j \not \to m-1})
	\end{align}
	holds, where $A$ is defined by (\ref{eq:def-A}).
\end{lem}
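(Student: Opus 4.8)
The plan is to decompose the event $A$ according to which of the root edges carry a link. Since $A=[N_\beta^{\{r,x_j\}}\leq 1 \ \forall j]$ by (\ref{eq:def-A}), writing $A_J:=[N_\beta^{\{r,x_j\}}=1 \ \forall j\in J]\cap[N_\beta^{\{r,x_j\}}=0 \ \forall j\notin J]$ for $J\subseteq\{1,\ldots,d\}$ gives a disjoint decomposition $A=\bigsqcup_J A_J$, hence $\mathbb P^\theta_{T_0}(A\cap B_{T_0}^{r\not\to m})=\sum_J \mathbb P^\theta_{T_0}(A_J\cap B_{T_0}^{r\not\to m})$. On each $A_J$ exactly $|J|$ of the root edges carry a single link, so $N_\beta^{\{r,x_j\}}\in\{0,1\}$ for all $j$ and the equality case of Proposition \ref{prop:loopnumbers-subgraphs} applies, giving $L_{T_0}=1+\sum_{j=1}^d L_{T_j}-|J|$.

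The structural heart of the argument is to describe the loop through $r$ on $A_J$. As in the proof of Proposition \ref{prop:loopnumbers-subgraphs}, adding the single link on $\{r,x_j\}$ at its (unique) time $t_j$ merges the loop currently containing $r$ with the loop of $\mathcal L_{T_j}$ through $(x_j,t_j)$; processing the edges $j\in J$ one at a time, the loop of $\mathcal L_{T_0}$ containing $r$ is exactly the union of the vertical interval of $r$ with these subtree loops through the points $(x_j,t_j)$, $j\in J$ (the cross/bar type affects only the orientation, not the support, so it is irrelevant here). Since generation $m$ of $T_0$ is generation $m-1$ of each $T_j$, this loop reaches generation $m$ of $T_0$ iff one of the subtree loops through $(x_j,t_j)$ reaches generation $m-1$ of $T_j$. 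Consequently, on $A_J$ one has $B_{T_0}^{r\not\to m}=\bigcap_{j\in J}\{t_j\in S_j\}$, where $S_j\subseteq[0,\beta)$ is the random set of times $s$ for which the loop of $\mathcal L_{T_j}$ through $(x_j,s)$ fails to reach generation $m-1$. I expect this identification, together with the observation that $S_j$ depends only on the links inside $T_j$ and is therefore independent of $t_j$ and of the edge process on $\{r,x_j\}$, to be the main obstacle; once it is in place the rest is a computation.

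With this in hand I would compute directly. On $A_J$ the link data on the $d$ root edges is independent of the subtree configurations; the edges $j\notin J$ are empty, each contributing the Poisson weight $\mathrm{e}^{-\beta}$, and for $j\in J$ the single link time is uniform, so integrating it against $\mathbf 1_{\{s\in S_j\}}$ produces a factor $\mathrm{e}^{-\beta}|S_j|$. Using $L_{T_0}=1+\sum_j L_{T_j}-|J|$ and the independence of distinct subtrees, this yields
\[
\mathbb E_{T_0}\!\left[\mathbf 1_{A_J\cap B_{T_0}^{r\not\to m}}\theta^{L_{T_0}}\right]=\theta^{\,1-|J|}\mathrm{e}^{-\beta d}\prod_{j\notin J}\mathbb E_{T_j}\!\left[\theta^{L_{T_j}}\right]\prod_{j\in J}\mathbb E_{T_j}\!\left[\theta^{L_{T_j}}|S_j|\right].
\]
Since $0\leq|S_j|\leq\beta$ and $|S_j|>0$ precisely on $B_{T_j}^{x_j\not\to m-1}$, we have $|S_j|\leq\beta\,\mathbf 1_{B_{T_j}^{x_j\not\to m-1}}$, which bounds each factor with $j\in J$ by $\beta\,\mathbb E_{T_j}[\theta^{L_{T_j}}\mathbf 1_{B_{T_j}^{x_j\not\to m-1}}]$.

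Finally I would divide by the partition-function lower bound (\ref{eq:part-func-est-below}), namely $\mathbb E_{T_0}[\theta^{L_{T_0}}]\geq\theta\,\mathrm{e}^{-\beta d+\beta d/\theta}\prod_j\mathbb E_{T_j}[\theta^{L_{T_j}}]$. The products over all subtrees cancel, the prefactor $\theta$ cancels, and the surviving $\mathrm{e}^{-\beta d}/\mathrm{e}^{-\beta d+\beta d/\theta}=\mathrm{e}^{-\beta d/\theta}$ together with the $\theta^{-|J|}\beta^{|J|}$ leaves
\[
\mathbb P^\theta_{T_0}(A_J\cap B_{T_0}^{r\not\to m})\leq \mathrm{e}^{-\beta d/\theta}\Big(\tfrac{\beta}{\theta}\Big)^{|J|}\prod_{j\in J}\mathbb P^\theta_{T_j}(B_{T_j}^{x_j\not\to m-1})
\]
for each $J$. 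Summing over $J\subseteq\{1,\ldots,d\}$ gives the claimed inequality.
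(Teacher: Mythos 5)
Your proposal is correct and follows essentially the same route as the paper's proof: the same decomposition of $A$ into the events $A_J$, the same identification of the root loop on $A_J$ (the whole $r$-circle glued to the subtree loops through $(x_j,t_j)$), the equality case of Proposition \ref{prop:loopnumbers-subgraphs}, independence of root-edge links from subtree configurations, and division by the lower bound (\ref{eq:part-func-est-below}). The only cosmetic difference is that you integrate the uniform link times exactly against the random sets $S_j$ and then bound $|S_j|\leq\beta\,\mathbf{1}\left[B_{T_j}^{x_j\not\to m-1}\right]$, whereas the paper applies the equivalent pointwise inclusion $B_{T_j}^{(x_j,t_j)\not\to m-1}\subseteq B_{T_j}^{x_j\not\to m-1}$ before taking expectations.
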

\begin{proof}
	For $(x,t)\in \mathcal V(T_0)\times [0,\beta)_\textnormal{per}$, $m \in \mathbb N$ and $0 \leq j \leq d$ let us define the event
	\begin{align}
	\begin{split}
		B_{T_j}^{(x,t)\not\to m}:=\big\{ \omega : &\text{ The loop within }\mathcal L_{T_j}(\omega)\text{ that contains }(x,t)
		\\
		&\qquad\qquad\qquad\quad\text{ fails to reach generation }m\big\},
	\end{split}
	\end{align}
	therefore we have and $B_{T_j}^{(x,t)\not \to m} \subseteq B_{T_j}^{x\not \to m}$ for every time $t$. For any subset $J \subseteq \{1,\ldots,d\}$ we may also write 
	\begin{equation}
	A_{J}:= \left [N_\beta^{\{r,x_j\}}=\left \lbrace \begin{array}{cl} 1 & \text{ if }j \in J \\ 0 & \text{ else} \end{array} \right. \right ] .\label{eq:Ak-def}
	\end{equation}
	Now let us fix $\omega \in A_{J}$ and for $j \in J$ denote by $t_j=t_j(\omega)$ the time of the link on $\{r,x_j\}$. Since the loop $\gamma_{(r,0)} \in \mathcal L_{T_0}(\omega)$ containing $(r,0)$ fails to reach generation $m$ iff for all $j \in J$ the loop within $\mathcal L_{T_j}(\omega)$ containing $(x_j,t_j(\omega))$ fails to reach down $m-1$ generations, we have
	\begin{align}
	\mathbf{1}\left[{B_{T_0}^{r\not \to m}}\right](\omega)&=\mathbf{1}\left[{B_{T_0}^{(r,0)\not\to m}}\right](\omega) \label{eq:AJ-single-looproot}\\
	&= \prod_{j=1}^k \mathbf{1}\left[{B_{T_j}^{(x_j,t_j(\omega))\not\to m-1}}\right](\omega) \leq \prod_{j=1}^k \mathbf{1}\left[{B_{T_j}^{x_j \not \to m-1}}\right](\omega).
	\end{align}
	Here, (\ref{eq:AJ-single-looproot}) holds as, by $\omega \in A_J$, edges containing $r$ have at most one link, hence $\{r\}\times [0,\beta)$ is contained in one single loop.
	Therefore, using Proposition \ref{prop:loopnumbers-subgraphs}, we obtain
	\begin{equation}
	\begin{split}
	&\mathbb E_{T_0}\left[\mathbf{1}_{A_{J}} \mathbf{1}\left[{B_{T_0}^{r \not \to m}}\right] \, \theta^{L_{T_0}}\right] \\
	&\leq\mathbb E_{T_0} \left [ \mathbf{1}_{A_J} \prod_{j\in J} \left(\mathbf{1}\left[{B_{T_j}^{x_j \not \to m-1}}\right] \,\theta^{L_{T_j}} \right) \prod_{j\in J^\com} \left(\theta^{L_{T_j}} \right) \right ] \cdot \theta^{-|J|+1}.
	\end{split}
	\end{equation}
	By independence, this yields
	\begin{align}
	\begin{split}
	&\mathbb E_{T_0}\left[\mathbf{1}_{A_J} \mathbf{1}\left[{B_{T_0}^{r\not\to m}}\right] \, \theta^{L_{T_0}}\right]
	\\
	&\leq\theta^{1-|J|} \cdot \rho_{T_0}(A_J) \cdot \prod_{j\in J} \mathbb E_{T_0} \bigg [ \mathbf{1}\left[{B_{T_j}^{x_j \not \to m-1}}\right] \cdot \theta^{L_{T_j}} \bigg ] \cdot \prod_{j\in J^\com} \mathbb E_{T_0} \bigg [ \theta^{L_{T_j}} \bigg ]
	\end{split}
	\\
	&=\theta^{1-|J|} \cdot \beta^{|J|} \mathrm{e}^{-\beta d}\cdot \prod_{j\in J}  \mathbb{P}_{T_j}^\theta \big( B_{T_j}^{x_j \not \to m-1}\big) \mathbb{E}_{T_j} \bigg [\theta^{L_{T_j}} \bigg ] \cdot \prod_{j\in J^\com}  \mathbb{E}_{T_j} \bigg [ \theta^{L_{T_j}} \bigg ].
	\end{align}
	Using Corollary \ref{cor:part-func-estimation} we now see that
	\begin{align}
	\begin{split}
	&
	\mathbb P_{T_0}^\theta\left(A_J \cap B_{T_0}^{r\not\to m}\right)
	\\
	&=  \frac{\mathbb E_{T_0}\left[\mathbf{1}_{A_J} \mathbf{1}\left[{B_{T_0}^{r\not\to m}}\right] \, \theta^{L_{T_0}} \right]}{\mathbb E_{T_0} \left[ \theta^{L_{T_0}} \right] } 
	\end{split}
	\\
	&\leq \frac{\theta^{1-|J|} \cdot \beta^{|J|} \mathrm{e}^{-\beta d}\cdot \prod_{j\in J}  \mathbb{P}_{T_j}^\theta \big( B_{T_j}^{x_j \not \to m-1}\big) \cdot \prod_{j=1}^d \mathbb E_{T_j} \bigg [ \theta^{L_{T_j}} \bigg ]}{\theta \cdot \mathrm{e}^{-\beta d+\beta d/\theta} \prod_{j=1}^d \mathbb E_{T_j}\left[\theta^{L_{T_j}}\right]} 
	\\
	&= \mathrm{e}^{-\beta d/\theta} \left( \frac{\beta}{\theta} \right)^{|J|} \prod_{j\in J}  \mathbb{P}_{T_j}^\theta \big( B_{T_j}^{x_j \not \to m-1}\big) .
	\end{align}
		Therefore, we are able to conclude
		\begin{align} 
		\mathbb P_{T_0}^\theta \left( A \cap B_{T_0}^{r\not\to m} \right) &=\; \sum_{\mathclap{J \subseteq \{1,\ldots,d\}}} \; \mathbb P_{T_0}^\theta\left(A_J \cap B_{T_0}^{r\not\to m}\right)  \label{eq:ABcest}
		\\
		&\leq \; \sum_{\mathclap{J \subseteq \{1,\ldots,d\}}} \; \mathrm{e}^{-\beta d/\theta} \left( \frac{\beta}{\theta} \right)^{|J|} \prod_{j\in J}  \mathbb{P}_{T_j}^\theta \big( B_{T_j}^{x_j \not \to m-1}\big).
		\end{align}
\end{proof}

We now turn our attention to the \GW tree and show the crucial recurrence relation.

\begin{lem} \label{lem:GWzeta-recursion}
	Consider the \GW tree with offspring distribution $X$. Define
	\begin{align}
	\zeta^m_n:=\Pqu[B_{T^X_{r,n}}^{x \not \to m}].
	\end{align}
	The following recursion holds:
	\begin{align}
	1-\zeta^m_n
	\geq \EGW \left[ \left(\frac{\theta^2+\beta \theta}{\theta^2+\mathrm{e}^{\beta\theta}-1}\right)^X - \left(\mathrm{e}^{-\beta/\theta} \left(1+ \frac{\beta}{\theta} 
	\zeta^{m-1}_{n-1}\right) \right)^X \right].
	\end{align}
\end{lem}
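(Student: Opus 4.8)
The plan is to prove a deterministic bound on a single fixed finite tree, and then to average over the Galton--Watson law by a branching/self-similarity argument. Throughout I work in the setting \eqref{eq:tree-subtrees-setting}, writing $d$ for the (random) number of children of the root $r$.

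\emph{Step 1 (quenched bound on a fixed tree).} First I would split the event $B_{T_0}^{r\not\to m}$ according to whether $A$ occurs, so that
\begin{equation*}
\mathbb P^\theta_{T_0}(B_{T_0}^{r\not\to m}) \;\leq\; \mathbb P^\theta_{T_0}(A\cap B_{T_0}^{r\not\to m}) + \big(1-\mathbb P^\theta_{T_0}(A)\big),
\end{equation*}
and hence, after rearranging,
\begin{equation*}
1-\mathbb P^\theta_{T_0}(B_{T_0}^{r\not\to m}) \;\geq\; \mathbb P^\theta_{T_0}(A) - \mathbb P^\theta_{T_0}(A\cap B_{T_0}^{r\not\to m}).
\end{equation*}
For the first term I invoke the lower bound $\mathbb P^\theta_{T_0}(A)\geq \big(\tfrac{\theta^2+\beta\theta}{\theta^2+\mathrm e^{\beta\theta}-1}\big)^d$ from Corollary \ref{cor:PofA-estimates}. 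For the second term I use Lemma \ref{lem:P-of-A-cap-B-recursion}, together with the observation that the sum over subsets factorizes,
\begin{equation*}
\sum_{J\subseteq\{1,\ldots,d\}}\Big(\tfrac{\beta}{\theta}\Big)^{|J|}\prod_{j\in J}\mathbb P^\theta_{T_j}(B_{T_j}^{x_j\not\to m-1}) = \prod_{j=1}^d\Big(1+\tfrac{\beta}{\theta}\,\mathbb P^\theta_{T_j}(B_{T_j}^{x_j\not\to m-1})\Big),
\end{equation*}
so that the common prefactor $\mathrm e^{-\beta d/\theta}$ pulls out. This produces the per-tree bound
\begin{equation*}
1-\mathbb P^\theta_{T_0}(B_{T_0}^{r\not\to m}) \;\geq\; \Big(\tfrac{\theta^2+\beta\theta}{\theta^2+\mathrm e^{\beta\theta}-1}\Big)^{d} - \mathrm e^{-\beta d/\theta}\prod_{j=1}^d\Big(1+\tfrac{\beta}{\theta}\,\mathbb P^\theta_{T_j}(B_{T_j}^{x_j\not\to m-1})\Big).
\end{equation*}

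\emph{Step 2 (averaging over the tree).} Since $1-\zeta^m_n=\EGW\big[\,1-\mathbb P^\theta_{T^X_{r,n}}(B_{T^X_{r,n}}^{r\not\to m})\,\big]$, I apply the per-tree bound under $\EGW$. The first term contributes exactly $\EGW\big[(\tfrac{\theta^2+\beta\theta}{\theta^2+\mathrm e^{\beta\theta}-1})^X\big]$, matching the target. For the second I condition on the root offspring $X=d$: given $X=d$, the subtrees $T_1,\dots,T_d$ are i.i.d. Galton--Watson trees cut at level $n-1$, mutually independent and independent of $d$. Because $\mathbb P^\theta_{T_j}(B_{T_j}^{x_j\not\to m-1})$ depends only on links within $\mathcal E(T_j)$, the product factorizes under $\EGW$, each factor averaging to $1+\tfrac{\beta}{\theta}\,\zeta^{m-1}_{n-1}$ by the definition of $\zeta$ and the self-similarity of the tree. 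Hence the second term equals $\EGW\big[(\mathrm e^{-\beta/\theta}(1+\tfrac{\beta}{\theta}\zeta^{m-1}_{n-1}))^X\big]$, and combining the two contributions yields the claimed recursion.

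\emph{Main obstacle.} The delicate point is the self-similarity and the interchange of the two expectations. The loop measure $\mathbb P^\theta_{T_0}$ itself does \emph{not} factorize over the subtrees, since the weight $\theta^{\#\text{loops}}$ couples them through the root; the entire purpose of restricting to $A$ is that Lemma \ref{lem:P-of-A-cap-B-recursion} has already converted the coupled quantity into a genuine product of single-subtree probabilities, and that is where the real work lies. Once that is in place, I must verify carefully that each $\mathbb P^\theta_{T_j}(B_{T_j}^{x_j\not\to m-1})$ is a function of the standalone subtree $T_j$ alone (as guaranteed by the fact that $L_{T_j}$ only sees links on $\mathcal E(T_j)$), so that conditioning on $X=d$ renders the $d$ factors i.i.d. with common mean exactly $\zeta^{m-1}_{n-1}$. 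The degenerate case $X=0$ is automatically consistent, both terms collapsing to $1$.
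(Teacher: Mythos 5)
Your proposal is correct and follows essentially the same route as the paper's proof: the same splitting $1-\mathbb P^\theta_{T_0}(B_{T_0}^{r\not\to m})\geq \mathbb P^\theta_{T_0}(A)-\mathbb P^\theta_{T_0}(A\cap B_{T_0}^{r\not\to m})$, the same inputs (Corollary \ref{cor:PofA-estimates} and Lemma \ref{lem:P-of-A-cap-B-recursion}), and the same self-similarity/independence of subtrees under the Galton--Watson law. The only cosmetic difference is that you collapse the subset sum into the product $\prod_{j=1}^d\bigl(1+\tfrac{\beta}{\theta}\,\mathbb P^\theta_{T_j}(B_{T_j}^{x_j\not\to m-1})\bigr)$ before averaging, whereas the paper averages term by term over $J$ and then re-sums via the binomial theorem; the two computations are identical in content.
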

\begin{proof}
	To begin denote by $X_r$ the number of offspring of the root and fix $n,m \in \mathbb N$ with $n \geq m$. For a realisation of the \GW tree, define $T_0:=T^X_{r,n}$ to be the (random) tree rooted in $r$ that is cut at level $n$. Then we observe that
	\begin{equation}
	1-\zeta^m_n 
	=\sum_{d \in \mathbb N_0}\EGW\left[\mathbb P_{T_0}^\theta \left((B_{T_0}^{x \not \to m})^\com\right)\bigg| X_r=d\right]\PGW[X_r=d].
	\end{equation}
	By using the notation from (\ref{eq:tree-subtrees-setting}) for $X_r=d$ and applying Corollary \ref{cor:PofA-estimates}, we obtain
	\begin{align}
	\mathbb P_{T_0}^\theta \left((B_{T_0}^{x \not \to m})^\com\right)
	&\geq \mathbb P_{T_0}^\theta \left( A \cap (B_{T_0}^{x \not \to m})^\com \right)
	\\
	&= \mathbb P_{T_0}^\theta(A) - \mathbb P_{T_0}^\theta\big(A \cap B_{T_0}^{r\not\to m}\big) 
	\\
	&\geq
	\left(\frac{\theta^2+\beta \theta}{\theta^2+\mathrm{e}^{\beta\theta}-1}\right)^d - \mathbb P_{T_0}^\theta\big(A \cap B_{T_0}^{r\not\to m}\big).
	\end{align}
	The second term can be estimated by using Lemma \ref{lem:P-of-A-cap-B-recursion}:
	\begin{align}
	\mathbb P_{T_0}^\theta\big(A \cap B_{T_0}^{r\not\to m}\big) \leq \; \sum_{\mathclap{J \subseteq \{1,\ldots,d\}}} \; \mathrm{e}^{-\beta d/\theta} \left(\frac{\beta}{\theta}\right)^{|J|} \prod_{j\in J} \mathbb P^\theta_{T_j}\left(B_{T_j}^{x_j \not \to m-1}\right).
	\end{align}
	Hence by taking the expectation we have 
	\begin{align}
	&\EGW\left[\mathbb P_{T_0}^\theta \left(A \cap B_{T_0}^{x \not \to m}\right) \bigg| X_r=d\right]
	\\
	&\leq \sum_{\mathclap{J \subseteq \{1,\ldots,d\}}} \; \mathrm{e}^{-\beta d/\theta} \left(\frac{\beta}{\theta}\right)^{|J|} \prod_{j\in J} \EGW\left[\mathbb{P}_{T_j}^\theta \left( B_{T_j}^{x_j \not \to m-1}\right)\right].
	\intertext{By self-similarity in expectation, this yields}
	&= \sum_{k=0}^d {d\choose k} \mathrm{e}^{-\beta d/\theta}\left(\frac{\beta}{\theta}\right)^k \left(\zeta^{m-1}_{{n-1}}\right)^k
	\\
	&=\mathrm{e}^{-\beta d/\theta} \left(1+ \frac{\beta}{\theta}\zeta_{{n}-1}^{m-1}\right)^d.
	\end{align}
	The result follows.
\end{proof}

We can now use this recursion to prove the occurrence of long loops in the \GW tree under our assumptions on the distribution of $X$.
\begin{proof}[Proof of Theorem \ref{thrm:GWloops}, part 1.]
	We proceed via induction on $m$ to prove that, for all $m \in \mathbb N$ and all ${n} \geq m$, we have 
	\begin{equation}
	1-\zeta_{n}^m \geq \varepsilon.
	\end{equation}
	This is sufficient as we have $(B_{T^X_{r,n}}^{x\not \to m})^\com\subseteq E^{r\to m}_{T^X_{r,n}}$.
	For the base step note that, given $X_r=d$, we have $B_{T^X_{r,n}}^{r \not \to 1}=A_\emptyset$. Therefore we have
	\begin{align}
	\zeta_{n}^{1}&=\sum_{d \in \mathbb N_0} \EGW\left[ \mathbb P^\theta_{T^X_{r,n}}(A_\emptyset) \bigg |X_r=d \right] \PGW[X_r=d] \\
	&\leq \sum_{d \in \mathbb N_0}\mathrm{e}^{-\frac{\beta d}{\theta}}\PGW[X_r=d]\leq 1-\varepsilon,
	\end{align}
	where the first inequality uses Corollary \ref{cor:PofA-estimates} and the last inequality follows from our assumption on the distribution of $X$. Assume now the induction hypothesis holds for $m-1$ and all $\tilde {n} \geq m-1$. By Lemma \ref{lem:GWzeta-recursion}, our induction hypothesis and our assumptions on the distribution of $X$ we can estimate
	\begin{align}
	1-\zeta_{n}^m &\geq \EGW \left[\left(\frac{\theta^2+\beta \theta}{\theta^2+\mathrm{e}^{\beta\theta}-1}\right)^X -  \left(\mathrm{e}^{-\beta /\theta} \left(1+ \frac{\beta}{\theta}\zeta_{n-1}^{m-1}\right)\right)^X \right]
	\\
	&\geq \EGW \left[\left(\frac{\theta^2+\beta \theta}{\theta^2+\mathrm{e}^{\beta\theta}-1}\right)^X -  \left(\mathrm{e}^{-\beta /\theta} \left(1+ \frac{\beta}{\theta}(1-\varepsilon)\right)\right)^X \right]
	\\
	&\geq \varepsilon.
	\end{align}
\end{proof}

\begin{proof}[Proof of Corollary \ref{cor:GWloops-large-expectation}, part 1]
	For $\lambda\in \mathbb N$ write $X_\lambda:=\lambda\cdot X$, $\bar d_\lambda:=\EGW[X_\lambda]$. Now choose $\varepsilon>0$ such that the inequalities
	\begin{flalign}
	&& \EGW\left[1-\exp\left(-\frac{a}{\theta c_1 \PGW(B_X)}\frac{X}{\EGW[X]} \right)\right]&\geq \varepsilon& 
	\label{eq:Xc-ind-start}
	\\
	&\text{and}& 
	\exp\left(-\frac{{a}}{\theta}\frac{\varepsilon}{\PGW(B_X)}\right) &< 1-\frac{\varepsilon}{\PGW(B_X)}
	\end{flalign}
	are fulfilled. Using $\PGW[X>0]>0$ and $\frac{{a}}{\theta}>1$, this possible. 
	Since
	\begin{align}
	1\geq\left(1+\frac{1}{\theta^2}\sum_{k=2}^\infty \frac{ \bar d_\lambda^{-k}\left(\frac{b \theta}{c_1 \PGW(B_X)}\right)^k}{k!} \right)^{-\bar d_\lambda} \longrightarrow 1 
	\end{align}
	holds as $\bar d_\lambda=\lambda\cdot \EGW[X]\overset{\lambda \to \infty}{\longrightarrow} \infty$, we may find a $\lambda_0 \in \mathbb N$ such that for all $\lambda \geq \lambda_0$ we have
	\begin{align}
	&\left(1+\frac{1}{\theta^2}\sum_{k=2}^\infty \frac{\left(\frac{b\theta}{c_1 \PGW(B_X) \bar d_\lambda}\right)^k}{k!} \right)^{-\bar d_\lambda \cdot c_2 } 
	\\
	&\geq 1-
	\underbrace{\left[
		1-\frac{\varepsilon}{\PGW(B_X)}-\exp\left(-\frac{a}{\theta}\frac{\varepsilon}{\PGW(B_X)}\right) \right]}_{>0 \text{ by choice of }\varepsilon} 
	\\
	&=\frac{\varepsilon}{\PGW(B_X)}+\exp\left(-\frac{a}{\theta}\frac{\varepsilon}{\PGW(B_X)}\right). \label{eq:cor4-lambda-estimate}
	\end{align}
	Then, by the lower bound on $\beta$ and (\ref{eq:Xc-ind-start}), we have
	\begin{align}
	\EGW\left[\mathrm{e}^{-\frac{\beta}{\theta}X_\lambda}\right] \leq \EGW\left[\exp\left(-\frac{1}{\theta}\frac{a}{c_1 \PGW(B_X) \lambda \EGW[X]} \lambda X \right)\right] \leq 1-\varepsilon,
	\end{align}
	therefore $X_\lambda$ satisfies the first condition of Theorem \ref{thrm:GWloops}, part 1.\\
	Furthermore, on $B_X=[c_1 \bar d_\lambda \leq X_\lambda \leq c_2 \bar d_\lambda]$ and taking $\beta \in \frac{1}{\bar d_\lambda} \frac{1}{c_1 \PGW(B_X)} [a,b]$, we can estimate
	\begin{align}
	\left(\frac{\theta^2+\beta \theta}{\theta^2+\mathrm{e}^{\beta\theta}-1}\right)^{X_\lambda} 
	&= \left( 1 + \frac{1}{\theta^2+\beta \theta} \sum_{k=2}^\infty \frac{(\beta\theta)^k}{k!} \right)^{-X_\lambda} 
	\\
	&\geq \Bigg(1+\frac{1}{\theta^2}\sum_{k=2}^\infty \frac{\left(\frac{{b}\theta}{\bar d_\lambda c_1 \PGW(B_X) }\right)^k}{k!}\Bigg)^{-c_2 \bar d_\lambda}
	\\
	&\overset{\mathclap{(\ref{eq:cor4-lambda-estimate})}}{\geq} \frac{\varepsilon}{\PGW(B_X)}+\exp\left(-\frac{{a}}{\theta}\frac{\varepsilon}{\PGW(B_X)}\right)
	\intertext{as well as}
	\left(\mathrm{e}^{-\frac{\beta}{\theta}}\left(1+\frac{\beta}{\theta}(1-\varepsilon)\right) \right)^{X_\lambda} &=\bigg( \mathrm{e}^{-\frac{\beta}{\theta}\varepsilon} \underbrace{\mathrm{e}^{-\frac{\beta}{\theta}(1-\varepsilon)}\left(1+\frac{\beta}{\theta}(1-\varepsilon)\right)}_{\leq 1} \bigg)^{X_\lambda}
	\\
	&\leq \exp\left(-\frac{1}{\theta} \frac{{a}}{c_1 \PGW(B_X) \bar d_\lambda} \varepsilon \cdot c_1 \bar d_\lambda\right)
	\\
	&= \exp\left(- \frac{{a}}{\theta }\frac{\varepsilon}{\PGW(B_X)} \right).
	\end{align}
	This yields that
	\begin{align}
	&\EGW\left[ \left(\frac{\theta^2+\beta \theta}{\theta^2+\mathrm{e}^{\beta\theta}-1}\right)^{X_\lambda}-\left(\mathrm{e}^{-\frac{\beta}{\theta}}\left(1+\frac{\beta}{\theta}(1-\varepsilon)\right)\right)^{X_\lambda}\right]
	\\
	&\geq \left[ \frac{\varepsilon}{\PGW(B_X)} +\exp\left(-\frac{{a}}{\theta}\frac{\varepsilon}{\PGW(B_X)}\right) - \exp\left(- \frac{{a}}{\theta }\frac{\varepsilon}{\PGW(B_X)}\right) \right] \PGW(B_X)
	\\
	&=\varepsilon
	\end{align}
	and using the first part of Theorem \ref{thrm:GWloops}, we obtain the result.
\end{proof}

\begin{proof}[Proof of Theorem \ref{thrm:ex-infinite-loops}, part 1]
	We can deduce this from Corollary \ref{cor:GWloops-large-expectation} by considering the deterministic offspring distribution $X=1$ and setting $c_1:=c_2:=1$, hence $\PGW(B_X)=1$.
\end{proof}

\subsection{Absence of long loops}

We start by considering the \GW tree. The following lemma is sufficient to prove Theorem \ref{thrm:GWloops}, part 2, as it even shows exponential decay.
\begin{lem} \label{lem:GWexpdecay}
	For $\sigma^m_{n}:=\Pqu \left( E_{{T^X_{r,n}}}^{r \to m}\right)$,
	\begin{align}
	\tilde q:=\EGW\left[X \mathrm{e}^{-X\beta/\theta} \left(1+\frac{\mathrm{e}^{\beta\theta}-1}{\theta^2}\right)^{X-1} \right]\frac{\mathrm{e}^{\beta\theta}-1}{\theta^2}
	\end{align}
	and all $n, m \in \mathbb N$ with $n \geq m$ we have $\sigma^m_n \leq {\tilde q}^{m-1}$.
\end{lem}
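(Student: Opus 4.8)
The plan is to mirror the structure of the occurrence proof, but in the opposite direction: first establish a one-step \emph{upper} bound for a single fixed tree $T_0$ as in \eqref{eq:tree-subtrees-setting}, then condition on the root offspring, take the \GW expectation, and exploit the self-similarity in expectation exactly as in the proof of Lemma \ref{lem:GWzeta-recursion} to obtain a closed recursion for $\sigma^m_n$. Finally I would iterate that recursion $m-1$ times and use a trivial base case.

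The deterministic heart of the argument is the pointwise bound, for root degree $d$,
\[
\mathbf 1\!\left[E^{r\to m}_{T_0}\right]\;\leq\;\sum_{j=1}^d\mathbf 1\!\left[N_\beta^{\{r,x_j\}}\geq 1\right]\,\mathbf 1\!\left[E^{x_j\to m-1}_{T_j}\right].
\]
This is the step requiring the most care, and is the main obstacle. If the loop through $r$ reaches generation $m$, say at a vertex $v$ lying in the subtree $T_j$, I would trace the loop backwards from its visit to $v$ to the \emph{last} crossing of the edge $\{r,x_j\}$ into $T_j$, occurring at some link time $t'$. On that final excursion the loop cannot return to $r$ before reaching $v$ (otherwise $t'$ would not be the last crossing in), so it stays inside $T_j$ and uses only links of $\mathcal E(T_j)$; hence the genuine $T_j$-loop through $(x_j,t')$ also reaches $v$, so it contains $x_j$ and reaches generation $m-1$, while the mere existence of $t'$ forces $N_\beta^{\{r,x_j\}}\geq 1$. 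The subtlety is that a single $T_0$-loop may cross back and forth to the root many times; passing to the last excursion before $v$ is precisely what makes the comparison with a genuine $T_j$-loop valid, and it is why (in contrast to Lemma \ref{lem:P-of-A-cap-B-recursion}) I do not want to condition on the event $A$ here.

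Next I would insert this bound into $\mathbb E_{T_0}[\,\cdot\,\theta^{L_{T_0}}]$ and dominate $\theta^{L_{T_0}}$ using the upper loop-number estimate of Proposition \ref{prop:loopnumbers-subgraphs} (which is licit since $\theta\geq 1$), namely $\theta^{L_{T_0}}\leq\theta\prod_{k=1}^d\theta^{L_{T_k}}\prod_{k=1}^d\theta^{|N_\beta^{\{r,x_k\}}-1|-1}$. By independence of edges and subtrees the expectation factorises: the descending edge $j$ contributes $\mathbb E[\mathbf 1[N\geq1]\theta^{|N-1|-1}]=\mathrm e^{-\beta}\tfrac{\mathrm e^{\beta\theta}-1}{\theta^2}$, each of the remaining $d-1$ edges contributes $\mathbb E[\theta^{|N-1|-1}]=\mathrm e^{-\beta}\big(1+\tfrac{\mathrm e^{\beta\theta}-1}{\theta^2}\big)$, and subtree $j$ contributes $\mathbb E_{T_j}[\mathbf 1[E^{x_j\to m-1}]\theta^{L_{T_j}}]$. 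Dividing by $\mathbb E_{T_0}[\theta^{L_{T_0}}]$ and bounding the denominator below with \eqref{eq:part-func-est-below} cancels all subtree partition functions and leaves the clean one-step estimate
\[
\mathbb P^\theta_{T_0}\!\left(E^{r\to m}_{T_0}\right)\;\leq\;\mathrm e^{-\beta d/\theta}\Big(1+\tfrac{\mathrm e^{\beta\theta}-1}{\theta^2}\Big)^{d-1}\tfrac{\mathrm e^{\beta\theta}-1}{\theta^2}\sum_{j=1}^d\mathbb P^\theta_{T_j}\!\left(E^{x_j\to m-1}_{T_j}\right).
\]

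Finally I would condition on $X_r=d$ and apply $\EGW$. Since conditionally on the root degree the subtrees are independent copies of the \GW tree cut at level $n-1$, self-similarity in expectation gives $\EGW[\mathbb P^\theta_{T_j}(E^{x_j\to m-1})]=\sigma^{m-1}_{n-1}$ for every $j$, while the prefactor depends only on $d$. Summing over $d$ against $\PGW[X_r=d]$ turns the factor $\sum_{j=1}^d(\cdots)$ into a factor $X$ inside the expectation and yields exactly
\[
\sigma^m_n\;\leq\;\EGW\!\left[X\,\mathrm e^{-\beta X/\theta}\Big(1+\tfrac{\mathrm e^{\beta\theta}-1}{\theta^2}\Big)^{X-1}\right]\tfrac{\mathrm e^{\beta\theta}-1}{\theta^2}\;\sigma^{m-1}_{n-1}\;=\;\tilde q\,\sigma^{m-1}_{n-1}.
\]
Iterating this $m-1$ times down to $\sigma^1_{n-m+1}$ and using the trivial bound $\sigma^1_{\,\cdot}\leq 1$ (it is a probability) gives $\sigma^m_n\leq\tilde q^{\,m-1}$ for all $n\geq m$, as claimed.
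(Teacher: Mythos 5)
Your proposal is correct and takes essentially the same route as the paper: the geometric reduction of $E^{r\to m}_{T_0}$ to a subtree event $E^{x_j\to m-1}_{T_j}$ together with a link on $\{r,x_j\}$, domination of $\theta^{L_{T_0}}$ via the upper bound of Proposition \ref{prop:loopnumbers-subgraphs}, factorization by independence, normalization by the partition-function lower bound \eqref{eq:part-func-est-below}, self-similarity in expectation, and iteration of the resulting recursion $\sigma^m_n\leq\tilde q\,\sigma^{m-1}_{n-1}$. The only difference is bookkeeping: you apply a direct union bound over children and compute per-edge expectations (yielding the factor $\bigl(1+\tfrac{\mathrm{e}^{\beta\theta}-1}{\theta^2}\bigr)^{d-1}$ immediately), whereas the paper partitions into the events $\hat A_J$ according to which root edges carry links and then sums a binomial identity --- both give the identical one-step estimate.
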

\begin{proof} 
	Let us fix $m,n \in \mathbb N$ and write $T_0:=T^X_{r,n}$ for a realisation of the \GW tree rooted in $r$ and cut after $n$ generations. Furthermore, given such a realisation, consider the setting of (\ref{eq:tree-subtrees-setting}), in particular $d:=X_r$ is the number of children of the root $r$. Then for $ J \subseteq \{1,\ldots,d\}$ and on 
	\begin{align}
	\hat A_{J}:=\left [N_\beta^{\{r,x_j\}}\left \lbrace \begin{array}{cl} \geq 1 & \text{ if } j \in J \\  =0 & \text{ otherwise}
	\end{array} \right. \right ] 
	\end{align}
	by Proposition \ref{prop:loopnumbers-subgraphs} we have
	\begin{align}
		L_{T_0} \leq 1 + \sum_{j=1}^d L_{T_j} - 2|J| + \sum_{j\in J} N_\beta^{\{r,x_j\}}.
	\end{align}
	Furthermore, being given the event $E_{T_0}^{r\to m}\cap \hat A_{J}$ means that we can find at least one $i\in J$ such that there is a loop within $T_i$ containing $x_i$ and reaching $m-1$ generations. Therefore we have
	\begin{align}
		E_{T_0}^{r\to m} \cap \hat A_{J}
		\subseteq \bigcup_{i\in J} E_{T_i}^{x_i \to m-1} \cap \hat A_{J}.
	\end{align}
	Using Corollary \ref{cor:part-func-estimation} we obtain
	\begin{align}
		&\mathbb P_{T_0}^\theta \left( E_{T_0}^{r\to m} \cap \hat A_{J} \right) 
		\notag \\
		&\leq \sum_{i\in J} \frac{\mathbb E_{T_0} \bigg [ \mathbf{1}\left[{E_{T_i}^{x_i \to m-1}}\right] \mathbf{1}_{\hat A_{J}} \, \theta^{1 -2|J|+\sum_{j\in J} N_\beta^{\{r,x_j\}}} \prod_{j=1}^d \theta^{L_{T_j}} \bigg ]}{\theta \mathrm{e}^{-\beta d + \beta d/\theta}\prod_{j=1}^d \mathbb E_{T_j}[\theta^{L_{T_j}}]}
		\\
		&= \mathrm{e}^{-\beta d/\theta} \theta^{-2|J|} (\mathrm{e}^{\beta \theta}-1)^{|J|} \sum_{i \in J} \mathbb P_{T_i}^\theta(E_{T_i}^{x_i \to m-1}).
	\end{align}
	This yields
	\begin{align}
		&\EGW\left[\mathbb{P}_{T_0}^\theta \left( E_{{T_0}}^{r \to m}\right) \bigg| X_r=d\right]
		\\
		&= \; \sum_{\mathclap{J \subseteq \{1,\ldots,d\}}} \; \EGW \left[\mathbb P_{T_0}^\theta \left( E_{T_0}^{r\to m} \cap \hat A_J \right) \bigg| X_r=d\right]
		\\
		&\leq \; \sum_{\mathclap{J \subseteq \{1,\ldots,d\}}} \;
		\mathrm{e}^{-\beta d/\theta} \left(\frac{\mathrm{e}^{\beta\theta}-1}{\theta^2}\right)^{|J|} \sum_{i \in J} {\EGW\left[ \mathbb P_{T_i}^\theta(E_{T_i}^{x_i \to m-1})\bigg| X_r=d\right]}
		\\	
		&= \; \sum_{\mathclap{J \subseteq \{1,\ldots,d\}}} \;
		\mathrm{e}^{-\beta d/\theta} \left(\frac{\mathrm{e}^{\beta\theta}-1}{\theta^2}\right)^{|J|} |J| \sigma^{m-1}_{n-1},	
	\end{align}
	where we used self-similarity in expectation. Therefore we conclude
	\begin{align}
		\sigma^m_n &= \sum_{d\in \mathbb N_0} \PGW[X_r=d]\cdot \EGW\left[\mathbb{P}_{T_0}^\theta \left( E_{{T_0}}^{r \to m}\right) \bigg| X_r=d\right] 
		\\
		&\leq \sigma^{m-1}_{n-1} \sum_{d \in \mathbb N_0} \PGW[X_r=d] \cdot \mathrm{e}^{-\beta d/\theta} \sum_{k=0}^d {d \choose k} k \left(\frac{\mathrm{e}^{\beta\theta}-1}{\theta^2}\right)^k
		\\
		&= \sigma^{m-1}_{n-1} \sum_{d \in \mathbb N_0} \PGW[X_r=d] \cdot \mathrm{e}^{-\beta d/\theta} d \, \frac{\mathrm{e}^{\beta\theta}-1}{\theta^2} \left(1+\frac{\mathrm{e}^{\beta\theta}-1}{\theta^2}\right)^{d-1}
		\\
		&= \sigma_{n-1}^{m-1} \, \EGW\left[X \mathrm{e}^{-X\beta/\theta} \left(1+\frac{\mathrm{e}^{\beta\theta}-1}{\theta^2}\right)^{X-1} \right]\frac{\mathrm{e}^{\beta\theta}-1}{\theta^2}.
	\end{align}
\end{proof}

\begin{proof}[Proof of Corollary \ref{cor:GWloops-large-expectation}, part 2]
	By making use of the estimation
	\begin{align} \label{eq:thrm2expestimation}
	\begin{split}
	& \mathrm{e}^{-X \beta/\theta} \left(1+\frac{\mathrm{e}^{\beta \theta}-1}{\theta^2}\right)^{X-1} 
	\\
	&\leq \exp\left( -X\frac{\beta}{\theta} + X\frac{\mathrm{e}^{\beta \theta}-1}{\theta^2}\right)
	\cdot {\underbrace{\left[\exp\left(-\frac{\mathrm{e}^{\beta\theta}-1}{\theta^2}\right) \left(1+\frac{\mathrm{e}^{\beta\theta}-1}{\theta^2}\right)\right]}_{\leq 1}}^{X}
	\end{split}
	\\
	&\leq \exp\left(X\frac{\mathrm{e}^{\beta\theta}-\beta\theta-1}{\theta^2}\right)	
	\end{align}
	we see that for an offspring distribution $X$ bounded by $d \in \mathbb N$ and for $\beta \leq \frac{q}{d}$ we calculate
	\begin{align}
	&\EGW\left[X \mathrm{e}^{-X \beta/\theta} \left(1+\frac{\mathrm{e}^{\beta \theta}-1}{\theta^2}\right)^{X-1}\right] \frac{\mathrm{e}^{\beta \theta}-1}{\theta^2}
	\\
	&\leq d \cdot \exp\left(d \frac{\mathrm{e}^{\beta\theta}-\beta\theta-1}{\theta^2}\right) \frac{\mathrm{e}^{\beta \theta}-1}{\theta^2}
	\\
	& \leq \frac{d}{\theta^2}\left(\mathrm{e}^{q \theta/d}-1\right)\exp\left(\frac{d}{\theta^2} \left(\mathrm{e}^{q\theta/d}-\frac{q\theta}{d}-1\right)\right) \\
	&=:c_d . \label{eq:d-reg-subcritical-proof-2}
	\end{align}
	Since $\lim_{d\to\infty} c_d=\frac{q}{\theta}<1$ holds, we may pick $d_0 \in \mathbb N$ such that for all $d \geq d_0$ the condition in the second part of Theorem \ref{thrm:GWloops} is fulfilled.
\end{proof}

\begin{proof}[Proof of Theorem \ref{thrm:ex-infinite-loops}, part 2] This follows from the second part of Corollary \ref{cor:GWloops-large-expectation} by picking the deterministic offspring distribution $X=d$.
	
\end{proof}

%
%
\section*{Acknowledgements}

The research of BL is supported by the Alexander von Humboldt Foundation.

\nocite{*}

\end{document}